\newtheorem{theorem}{Theorem}[section]
\newtheorem{lemma}[theorem]{Lemma}
\theoremstyle{remark}
\newtheorem*{remark}{Remark}
\newtheorem*{acknowledgment}{Acknowledgment}
\numberwithin{equation}{section}
\numberwithin{theorem}{section}
\DeclareMathOperator*{\esssup}{ess\;sup}
\DeclareSymbolFont{bbold}{U}{bbold}{m}{n}
\DeclareSymbolFontAlphabet{\mathbbold}{bbold}
\begin{document}
\title{Estimates on Functional Integrals of Quantum Mechanics and Non-Relativistic Quantum Field Theory}
\author[1]{Gonzalo A. Bley \thanks{gb3kd@virginia.edu}}
\author[1]{Lawrence E. Thomas \thanks{let@virginia.edu}}
\affil[1]{Department of Mathematics, University of Virginia,\authorcr Charlottesville, VA 22904, USA}

\maketitle


\begin{abstract}
We provide a unified method for obtaining upper bounds for certain functional integrals appearing in quantum mechanics 
and non-relativistic quantum field theory, functionals of the form $E\left[\exp(\mathcal{A}_T)\right]$, the (effective) action 
$\mathcal{A}_T$ being a function of particle trajectories up to time $T$. The estimates in turn yield rigorous lower bounds for ground state energies, 
via the Feynman-Kac formula. The upper bounds are obtained by writing the action for these functional integrals in terms of stochastic integrals. 
The method is illustrated in familiar quantum mechanical settings: for the hydrogen atom, for a Schr\"{o}dinger operator with $1/|x|^2$ 
potential with small coupling, and, with a modest adaptation of the method, for the harmonic oscillator. We then present our principal 
applications of the method, in the settings of non-relativistic quantum field theories for particles moving in a quantized Bose field, 
including the optical polaron and Nelson models.
\end{abstract}


\begin{section}{Introduction}
The intent of this article is to provide simple upper bounds on moment-generating functions arising
from problems in elementary quantum mechanics and non-relativistic quantum field theories. The bounds
are applicable to Feynman-Kac expectations $E^x\left[\exp(-\int_0^T V(s, X_s)ds)\right]$ in quantum mechanics,
with $E^x[\cdot]$ expectation with respect to $d$-dimensional Brownian motion $X_s$ starting at $x$. ($E^0$ will be written as $E$.)
 The bounds are also applicable to 
 problems involving non-relativistic particles interacting with a quantized Boson field.  After integrating out 
the field variables, one typically encounters expectations of the sort
\begin{equation}
E^x\left[\exp\left(\int_0^T\!\!\!\int_0^t f(t-s)/|X^{(i)}_ t- X^{(j)}_s |^{\theta}\,ds\,dt\right)\right],
\label{general_action}
\end{equation}
 in which particle ${i}$  interacts with its past trajectory or that of another particle ${j}$. The bounds are valid for
{\it all} times $T\geq 0$, including $T\rightarrow\infty$.  It will be important to us  in applications that the bounds be log-linear
in time $T$, $T\rightarrow\infty$, in order that energy expressions $-\lim_{T \to \infty}\frac{1}{T}\ln(E[\cdot])$ be finite.

The basic strategy for obtaining these bounds in both quantum mechanical and quantum field theoretic applications is twofold. First, one writes the 
argument of the exponential, i.e., the {\it action}, as its 
expectation plus a ``fluctuating" part, the latter represented by a stochastic integral. Provided the action is an $L^2$-functional, it can be
decomposed in this manner, with the integrand of the stochastic integral determined with the aid of the Clark-Ocone formula. Second, one then uses a martingale argument  to bound the moment 
generating functional for the stochastic integral, typically in terms of a new action less singular than that of the initial problem.  We do not claim that 
the strategy is altogether new; but we believe that a succinct description of it, together with an investigation into the efficacy of the resulting bounds in a wide 
range of applications, is useful.

Perhaps the simplest illustration of the applicability of these bounds is the case of the hydrogen problem with Hamiltonian
\begin{equation}
H_{\alpha} = -\frac{\Delta}{2} - \frac{\alpha}{|x|}
\end{equation}
acting in $L^2({\mathbb R}^3)$.
The method shows that the Feynman-Kac expression 
\begin{equation}
E\left[\exp\left(\int_0^T\frac{\alpha}{|X_t|}\,dt\right)\right]
\end{equation}
is bounded above by $\exp(2\sqrt{2/\pi}\alpha T^{1/2} + \alpha^2 T/2)$ for all $\alpha \geq 0$ and $T \geq 0$. 
We note that this expression has the correct asymptotic behavior for small $T$, and that it 
provides a lower bound for the ground state energy by Feynman-Kac.  This lower bound is sharp:
\begin{equation}
\text{inf spec }H_\alpha = -\frac{\alpha^2}{2} = -\lim_{T \to \infty}\frac{1}{T}\ln \left(E\left[\exp\left(\int_0^T\frac{\alpha}{|X_t|}\,dt\right)\right]\right)
\end{equation}
(see section 3.1). 
The method accommodates Hamiltonians with potentials having different powers of $|x|$ including the critical case
$V= -\alpha |x|^{-2}$ in $d\geq 3$ dimensions, $\alpha$ suitably small, as well as  situations where $\alpha$ is replaced by a time-dependent  function  (see section 3.2).  

Another  elementary example to be discussed is the use of a stochastic integral representation applied to the harmonic oscillator of quantum mechanics. The result is  another derivation of  a Cameron-Martin formula, meaning here an exact expression for the functional integral yielding the ground state energy of the harmonic 
oscillator \cite{CM} (see also \cite{KS1, RY}). Rather than simply writing the action as its expectation plus a stochastic integral however, 
the strategy is to solve a Hamilton-Jacobi-like equation for the action appearing in the functional integral,  the action still involving a stochastic integral.    Discussion of the harmonic oscillator has been placed at the
end of the article so that its analysis may be contrasted with the method described above (see the appendix, section A.3). We have included this example with its adaptation, suggesting a way of sharpening the method in specific applications.

To illustrate the method as it applies in non-relativistic quantum field theory, we consider the case of the optical polaron model of 
H. Fr\"{o}hlich \cite{HF}, 
a simple model of a non-relativistic electron interacting with the phonons of an ionic crystal. Its Hamiltonian takes the form
\begin{gather}\label{Frohham}
H_{\alpha} \equiv \frac{p^2}{2} + \int_{k \in {\mathbb R}^3} a_k^{\dagger}a_k\,dk + \frac{\sqrt{\alpha}}{2^{3/4}\pi}\int_{k \in {\mathbb R}^3}\frac{1}{|k|}(a_k e^{ikx} + a_k^{\dagger}e^{-ikx})\,dk,
\end{gather}
where $\alpha$ is a dimensionless constant, $x$ is in $\mathbb{R}^3$, $p = -i\nabla$, and the $a_k$ and $a_k^{\dagger}$ are Boson annihilation and 
creation operators that satisfy $[a_k, a_{k^{'}}^{\dagger}] = \delta(k - k^{'})$. Of interest to us will be Feynman's functional integral \cite{RF}
\begin{equation}\label{RF.eq1}
E\left[\exp\left(\frac{\alpha}{\sqrt{2}}\int_0^T\!\!\!\int_0^t\frac{e^{-(t - s)}}{|X_t - X_s|}\,ds\,dt\right)\right],
\end{equation}
which is of the form \eqref{general_action} with $f(t) = \alpha e^{-t}/\sqrt{2}$ and $\theta$ equal to 1. This expression is equal to 
the formal semigroup matrix element $(\delta_x\otimes \Omega_0, e^{-t H_{\alpha}}\mathbbold{1}\otimes \Omega_0)$, where $\Omega_0$ 
is the ground state of the Boson field and $\mathbbold{1}$ is the function identically equal to 1 on $ {\mathbb R}^3$; \eqref{RF.eq1} is obtained from
the matrix element by integrating out the Bose field variables. The limit 
\begin{equation}
-\lim_{T \to \infty}\frac{1}{T}\ln\left[(\delta_x\otimes \Omega_0, e^{-t H_{\alpha}}\mathbbold{1}\otimes \Omega_0)\right]
\end{equation}
then yields the ground state energy of the polaron. We obtain $-\alpha - \alpha^2/4$ as a lower bound for
this energy, which is correct to leading order in $\alpha$ 
for small $\alpha$ and is off from the known behavior $\sim -0.109 \alpha^2$ \cite{DV, LT} 
by a factor of about 2.5 for large $\alpha$. The method is applicable to the multipolaron case as well, see 
\cite{BB1, BB2, FLST}.
See section 3.3 for more details.

Finally, we consider an application of our method to analyzing the Feynman-Kac expression
\begin{equation}\label{Nelson.eq1}
E\left[\exp\left(\alpha\sum_{m, n}\int_0^T\!\!\!\int_0^t\!\!\int_{k \in {\mathbb R}^3}\frac{1}{\omega(k)}e^{ik(X_t^m - X_s^n)}
e^{-\omega(k)(t - s)}\chi_{\Lambda}(k)\,dk\,ds\,dt\right)\right]
\end{equation}
for the Nelson model \cite{N2}, a model for $N$ nucleons interacting with a scalar meson field, with Hamiltonian
\begin{equation}
\sum_{n = 1}^N\frac{p_n^2}{2} + \int_{k \in {\mathbb R}^3}\omega(k)a_k^{\dagger}a_k\,dk + 
\sum_{n = 1}^N\sqrt{\alpha}\int_{k \in {\mathbb R}^3}\frac{\chi_{\Lambda}}{\sqrt{\omega(k)}}(e^{ikx_n}a_k + e^{-ikx_n}a_k^{\dagger})\,dk.
\end{equation}
Here $\omega(k) = \sqrt{k^2 + \mu^2}$, where $\mu \geq 0$ is the meson mass. An ultraviolet cutoff $\chi_{\Lambda} = 1_{|k| \leq \Lambda}$ 
has been imposed in the particle-field interaction term, for otherwise the Hamiltonian would not be bounded below. The
functional integral \eqref{Nelson.eq1} is analogous  to \eqref{RF.eq1} for the polaron.

In E. Nelson's initial, remarkable work \cite{N1} on his models, the effective action was expanded into a sum of
terms: a deterministic integral, logarithmically divergent as the ultraviolet cutoff is removed, and additional
integrals, including single and double stochastic integrals, which are convergent as the ultraviolet cutoff is 
removed. He then provided bounds on the moment-generating functions for these additional terms. This
work was superseded by his later work \cite{N2} on the models, in which he employed a Gross transformation to
carry out the renormalization and show the existence of a regularized Hamiltonian by then perhaps more
transparent and familiar operator-theoretic methods.

Recently, Gubinelli, Hiroshima and L\"{o}rinczi \cite{GHL} have revisited
Nelson's stochastic integral approach for massive Nelson models or models with an additional infrared cutoff via stochastic integrals, giving a complete construction of
a regularized Hamiltonian, as well as greatly clarifying his original approach.  For small coupling they also uncover an effective attractive Yukawa-like potential between particles.  They write the effective action
for the models in terms of stochastic integrals much in the spirit of Nelson's original work, but using now well-developed methods  for estimating the moment-generating function for these integrals, e.g., the Girsanov formula, (closely related to the martingale argument we employ). 
In the analysis of \eqref{Nelson.eq1}, one encounters further expectations of the sort \eqref{general_action} given
in the introductory paragraph
with $1 < \theta < 2$. Our Theorem \ref{Theorem2} provides a bound on such expectations, log-linear in time $T$, for
large $T$.  The connection of Theorem \ref{Theorem2} with the work of Gubinelli et al., appears in section 3.4.

To give an overview, section 2 presents three theorems bounding the moment-generating functions for simple effective actions of 
particles interacting with their past trajectories and with those of other particles. Section 3 gives applications of the bounds 
to the models of quantum mechanics and non-relativistic quantum field theory previously mentioned. 
 Section 4.1 presents the main ideas behind the proofs of the bounds. Section 4.2 completes the proofs of the theorems, 
and 4.3 outlines alternative approaches to estimating functional integrals. An appendix provides  
introductory but sufficient material on Malliavin calculus and the Clark-Ocone formula for our needs,   ancillary
lemmas used in the proof of the theorems, and  the harmonic
oscillator application.

We have taken care to estimate the constants  in our theorems, particularly their 
dependence on the coupling function $f(t)$, the exponent $\theta$, and the spatial dimension $d$, 
in order to test  the theorems  numerically  in specific applications.  The method certainly accommodates other
forms for the actions, e.g., Yukawa- or Cauchy-like potentials.  
Finally, we remark that the bounds we obtain are of special utility in understanding both the 
effective attraction between particles and the question of stability of non-relativistic quantum field-theoretic Hamiltonians as a function 
of the number of particles. These applications will be addressed elsewhere.
\end{section}



\begin{section}{Theorems}
In this section, because of their similarity, our main results are stated together. We emphasize again that the bounds
are log-linear in time $T$ for large $T$. In all of the following, $T$ will be a fixed, finite, non-negative time; $f$ will denote a measurable, non-negative function 
defined on $[0, T]$; $\theta$ will be a number in $(0, 2)$; $X$ and $Y$ will denote independent, $d$-dimensional Brownian 
motions; and $d$ will be an integer greater than 1. For a function $g$, the norm $\|g\|_{p, s}$, for $1 \leq p \leq \infty$, will represent 
the $L^p$ norm of the function $g 1_{[0, s]}$.

We will make use below of the following definition. For a measurable, non-negative function $f$ on $[0, T]$, we define the 
$_*$-operation as $f_*(t) = \esssup_{t \leq s \leq T}f(s)$ -- a non-increasing version of $f$, with domain $[0, T]$. An application of $_*$ is the most natural 
way of turning an arbitrary non-negative function into a non-increasing one if one notices that $f$ is (essentially) non-increasing if and only if 
$f = f_*$ (a.e.). Notice also that this mapping stabilizes after a single application: $f_{**} = f_*$, and that it behaves well 
under scaling: if $\alpha > 0$, $(\alpha f)_* = \alpha f_*$. The map will flatten out the bumps of the 
function, from right to left. Some of our theorems will make use of 
the $_*$ operation, but the reader should always bear in mind that in most applications the function appearing will itself be non-increasing, and 
so $f_*$ will be equal to $f$ in these cases.

We will now state the main theorems. Theorem \ref{Theorem1} concerns explicit estimates for the functional integrals for sub- and 
super-Coulomb potentials with generally time-dependent coupling. Theorem \ref{Theorem2} is the analogue of Theorem \ref{Theorem1} 
with double integration. Theorem \ref{Theorem3} corresponds to Theorem \ref{Theorem2} for the case of independent Brownian motions. The theorems 
will make use of certain $\theta$- and $d$-dependent coefficients, which we now define:
\begin{eqnarray}
A_\theta\!\!\!\! & = & \!\!\!\!\frac{2^{(3\theta - 2)/(2 - \theta)}\theta^{\theta/(2 - \theta)}(2 - \theta)}{(d - \theta)^{2\theta/(2 - \theta)}},\\
B_\theta\!\!\!\! & = & \!\!\!\!\frac{\theta\Gamma[(d - \theta)/2]}{2^{\theta/2}\Gamma(d/2)},\\
C_\theta\!\!\!\! & = & \!\!\!\!\frac{2^{(3\theta - 2)/(2 - \theta)}\theta^{\theta/(2 - \theta)}(2 - \theta)}{(d - 1)^{2\theta/(2 - \theta)}},\\
D_\theta\!\!\!\! & = & \!\!\!\!\frac{\theta^{1/(2 - \theta)}\Gamma[(d - 1)/2)](d - 1)^{(2 - 2\theta)/(2 - \theta)}}
{2^{(6 - 5\theta)/(4 - 2\theta)}\Gamma(d/2)}.
\end{eqnarray}
\begin{theorem}\label{Theorem1}
Let $f$ be a non-negative measurable function. If $1 \leq \theta < 2$,
\begin{equation}
\sup_{x \in \mathbb{R}^d}E^x\left[\exp\left(\int_0^T\frac{f(t)}{|X_t|^{\theta}}\,dt\right)\right]
\label{functional_integral_1}
\end{equation}
is bounded above by
\begin{gather}
\exp\left(A_{\theta}\|f_*^{2/(2 - \theta)}\|_{1, T} + B_\theta\|f_*(t)/t^{\theta/2}\|_{1, T}\right),
\label{bound_theorem_1}
\end{gather}
whereas if $0 < \theta \leq 1$, it is bounded above by
\begin{gather}
\exp\left(
C_\theta\|f_*\|_{1, T}^{(2 - 2\theta)/(2 - \theta)}\|f_*^2\|_{1, T}^{\theta/(2 - \theta)} + 
D_\theta\left(\|f_*\|_{1, T}\|f_*^2\|_{1, T}^{-1}\right)^{(1 - \theta)/(2 - \theta)}\|f_*(t)/t^{1/2}\|_{1, T}\right).
\label{second_assertion_theorem_1}
\end{gather}
(If $\|f_*^2\|_{1, T} = 0$, then $f_* = f = 0$ almost everywhere, and therefore the functional integral \eqref{functional_integral_1} is 1.) 
Moreover, for every $0 < \theta < 2$, the supremum above is attained at $x = 0$.
\end{theorem}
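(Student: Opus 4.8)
The plan is to carry out the two-part strategy announced in the introduction — write the action $A_T := \int_0^T f(t)|X_t|^{-\theta}\,dt$ as its mean plus a stochastic integral, then dominate the moment-generating function of that integral by a ``new,'' less singular action — after first reducing everything to the starting point $x=0$, which simultaneously disposes of the ``moreover'' clause. Expanding the Feynman--Kac expectation by the Duhamel (Trotter) series into the nonnegative terms $\int_{0<t_1<\cdots<t_n<T}\int\prod_i p_{t_{i+1}-t_i}(y_{i+1}-y_i)\,f(t_i)|y_i|^{-\theta}\,dy\,dt$, each term is, as a function of the initial point, the convolution of the symmetric decreasing heat kernel $p_{t_1}$ with a nonnegative symmetric decreasing function of $y_1$; such a convolution is itself symmetric decreasing and hence maximized at the origin. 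Summing the series shows $x\mapsto E^x[\exp(A_T)]$ is maximized at $x=0$, which yields both the reduction and the final assertion.

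Next I would invoke the Clark--Ocone formula from the appendix. Since $A_T$ need not lie in $L^2$ when $2\theta\ge d$, I would first truncate the potential (for instance replace $|x|^{-\theta}$ by $|x|^{-\theta}\wedge N$), run the argument with constants uniform in $N$, and recover the claim by monotone convergence. For the truncated action, Clark--Ocone gives $A_T = E[A_T] + \int_0^T u_t\cdot dX_t$ with $u_t = \nabla\Phi(t,X_t)$, where $\Phi(t,x) = \int_t^T f(r)\,E^x[|X_{r-t}|^{-\theta}]\,dr$ is the expected remaining action and $E^x[|X_s|^{-\theta}] = (P_s|\cdot|^{-\theta})(x)$ for the heat semigroup $P_s = e^{s\Delta/2}$. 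A single Cauchy--Schwarz step against the mean-one exponential (super)martingale $\exp(2\int u\,dX - 2\int|u|^2\,dt)$ then yields $E[\exp(A_T)] \le \exp(E[A_T])\,\big(E[\exp(2\int_0^T|\nabla\Phi(t,X_t)|^2\,dt)]\big)^{1/2}$, reducing the problem to bounding the exponential moment of the new action $\int_0^T|\nabla\Phi(t,X_t)|^2\,dt$. The deterministic factor $\exp(E[A_T])$ contributes the linear-in-$f$ part, since $E^0[|X_t|^{-\theta}] = \Gamma((d-\theta)/2)\,2^{-\theta/2}\Gamma(d/2)^{-1}\,t^{-\theta/2}$ produces a term of the type $\|f_*(t)/t^{\theta/2}\|_{1,T}$ (the precise coefficient $B_\theta$, and the $t^{-1/2}$ weight in the $\theta\le1$ case, emerging only once the new-action estimate is folded in).

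The heart of the matter is the gradient estimate. Using the Gaussian identity $\nabla (P_s g)(x) = s^{-1/2}E[Z\,g(x+\sqrt s\,Z)]$ for standard Gaussian $Z$, the scaling $|x+\sqrt s\,Z|^{-\theta} = s^{-\theta/2}|\,x/\sqrt s + Z|^{-\theta}$, and a rearrangement to locate the worst initial point, I would bound $|\nabla\Phi(t,x)|$ by an explicit multiple of a less singular potential — morally $f_*(t)\,|x|^{1-\theta}$ near the origin — so that $|\nabla\Phi(t,X_t)|^2$ is controlled by $\tilde f(t)\,|X_t|^{-(2\theta-2)}$, of exponent $2\theta-2<\theta$. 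This is exactly where the two regimes separate: for $0<\theta\le1$ the exponent $2\theta-2$ is $\le0$, the new action is essentially bounded, and its moment-generating function is estimated directly, producing $C_\theta,D_\theta$; for $1\le\theta<2$ the new action remains mildly singular, and I would close the estimate either by iterating the scheme (the exponent $2\theta-2$ decreases strictly and falls below zero after finitely many steps) or, more cleanly, by a Young/Hölder inequality interpolating $|x|^{-(2\theta-2)}$ between $|x|^{-\theta}$ and a constant, which feeds back a small multiple of the original action together with a deterministic remainder. Optimizing that split is what generates the exponent $2/(2-\theta)$ and the term $A_\theta\|f_*^{2/(2-\theta)}\|_{1,T}$, and the $_*$-operation enters precisely to replace $f$ on each interval $[t,T]$ by a non-increasing majorant so that the optimizations are uniform in $t$.

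The main obstacle I anticipate is the short-time singularity of $\nabla\Phi$: because $\sup_x|\nabla(P_\sigma g)(x)|\sim\sigma^{-(\theta+1)/2}$ (with $g=|\cdot|^{-\theta}$) fails to be integrable in $\sigma$ for $\theta\ge1$, there is no deterministic bound on $\int_0^T|u_t|^2\,dt$, and one is forced to retain the dependence on $|X_t|$ and treat $\int|\nabla\Phi|^2\,dt$ as a genuine functional integral rather than a constant. Controlling its exponential moment while simultaneously tracking the sharp, closed-form dependence of the constants on $\theta$, $d$, and $f$, so as to land exactly on $A_\theta$ through $D_\theta$, is the delicate bookkeeping the argument must manage; the Gaussian gradient identity and the rearrangement step above are the tools I would rely on to keep those constants explicit.
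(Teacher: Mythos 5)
Your skeleton for $1\le\theta<2$ is essentially the paper's proof: Clark--Ocone writes the action as its mean plus $\int_0^T\rho_t\,dX_t$ with $\rho_t=\nabla\Phi(t,X_t)$, a convolution-type estimate gives $|\rho_t|\le \tfrac{2}{d-\theta}f_*(t)|X_t|^{1-\theta}$, and Young's inequality feeds the resulting less-singular action back into the original one, closing a self-referential inequality (your truncation even supplies the finiteness needed to close it, which the paper gets by an iteration argument). The genuine gap is your treatment of $0<\theta<1$. You claim that because the exponent $2\theta-2$ is $\le 0$ ``the new action is essentially bounded.'' It is not: for $\theta<1$ the bound $|\rho_t|^2\lesssim f_*(t)^2|X_t|^{2-2\theta}$ involves a potential that \emph{grows} at spatial infinity, and the exponential moment of $\int_0^T f_*(t)^2|X_t|^{2-2\theta}\,dt$ is not log-linear in $T$: its expectation alone is of order $\int_0^T t^{1-\theta}\,dt\sim T^{2-\theta}$, so by Jensen the moment-generating function is at least $\exp(cT^{2-\theta})$, incompatible with \eqref{second_assertion_theorem_1}. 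The paper handles $0<\theta<1$ by a step your plan is missing: Young's inequality applied to the \emph{original} action, $f(t)/|x|^{\theta}\le (1-\theta)\theta^{\theta/(1-\theta)}\varepsilon^{-\theta/(1-\theta)}f(t)+\varepsilon f(t)/|x|$, which reduces everything to the case $\theta=1$ --- the unique exponent at which the fluctuation integrand $f_*^2|x|^{2-2\theta}$ really is bounded, so that the deterministic bound \eqref{GM.eq2} applies --- followed by optimization in $\varepsilon$. This structure is visible in the statement itself: $C_\theta$ and $D_\theta$ carry $(d-1)$ rather than $(d-\theta)$, and the time weight is $t^{-1/2}$, not $t^{-\theta/2}$.

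A secondary but real defect: the theorem asserts the specific constants $A_\theta$, $B_\theta$, and these are unreachable once you commit to Cauchy--Schwarz. The paper's Martingale Estimate Lemma keeps a free H\"older exponent $p$, closes the feedback inequality to get $E[e^{\mathcal A_T}]\le e^{pE[\mathcal A_T]}\exp\bigl((p-1)D_\theta\|f_*^{2/(2-\theta)}\|_{1,T}\bigr)$, and only then sets $p=\theta$; this choice is exactly where the factor $\theta$ in $B_\theta$ and the value of $A_\theta$ come from. With $p=2$ fixed, even if you tune the ``small multiple'' in your feedback step so that the mean term reproduces $B_\theta$, the coefficient of $\|f_*^{2/(2-\theta)}\|_{1,T}$ comes out larger than $A_\theta$ by the factor $(2/\theta)^{\theta/(2-\theta)}$ (a factor of $2$ at $\theta=1$), so you would prove a strictly weaker statement. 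Two smaller points: your $L^2$ worry concerns the wrong object --- $\mathcal A_T$ lies in $L^2(\Omega)$ for every $\theta<2$ by the Markov property and maximality; it is the \emph{unconditioned} Malliavin derivative that fails to be square-integrable when $\theta\ge d/2$, which the paper cures by mollifying the potential --- and if you instead truncate to $|x|^{-\theta}\wedge N$, you must re-derive the gradient estimate for the truncated potential uniformly in $N$, which does not follow from the pure-power computation. Your rearrangement proof of maximality at $x=0$ is sound and parallels the paper's Gaussian-definiteness argument in the appendix.
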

\begin{theorem}\label{Theorem2}
For a non-negative measurable function $f$,
\begin{equation}
E\left[\exp\left(\int_0^T\!\!\!\int_0^t\frac{f(t - s)}{|X_t - X_s|^{\theta}}\,ds\,dt\right)\right]
\label{functional_integral_2}
\end{equation}
is bounded above by 
\begin{eqnarray}
\exp\left(A_\theta\int_0^T\|f_*\|_{1, t}^{2/(2 - \theta)}\,dt + B_\theta\int_0^T\|f_*(s)/s^{\theta/2}\|_{1, t}\,dt\right)
\end{eqnarray}
for $1 \leq \theta < 2$, and by
\begin{multline}
\exp\left\{C_\theta\left(\int_0^T\|f_*\|_{1, t}\,dt\right)^{(2 - 2\theta)/(2 - \theta)}\left(\int_0^T\|f_*\|_{1, t}^2\,dt\right)^{\theta/(2 - \theta)}
\right.\\
\hfill + \left. D_\theta\left[\left(\int_0^T\|f_*\|_{1, t}\,dt\right)\left(\int_0^T\|f_*\|_{1, t}^2\,dt\right)^{-1}\right]^{(1 - \theta)/(2 - \theta)}
\int_0^T\|f_*(s)/s^{1/2}\|_{1, t}\,dt\right\}
\label{second_assertion_theorem_2}
\end{multline}
for $0 < \theta \leq 1$. (If $\int_0^T\|f_*\|_{1, t}^2\,dt = 0$, then $f_* = f = 0$ almost everywhere, in which case 
\eqref{functional_integral_2} is 1.)
\end{theorem}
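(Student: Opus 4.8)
The plan is to prove Theorem \ref{Theorem2} by the same two-step scheme announced in the introduction, with Theorem \ref{Theorem1} serving as the single-time model. I would \emph{not} try to deduce it from Theorem \ref{Theorem1} by a soft argument: bounding the outer integral by Jensen in $t$ and then applying Theorem \ref{Theorem1} time-slice by time-slice inflates the coupling by a factor $T$ and destroys the log-linearity in $T$, so the martingale cancellation across the outer times must be kept. Working directly with the action $\mathcal{A}_T$ of \eqref{functional_integral_2}, I would first check $\mathcal{A}_T\in L^2$ so that the Clark-Ocone formula applies and write $\mathcal{A}_T = E[\mathcal{A}_T] + \int_0^T H_r\cdot dW_r$, $H_r = E[D_r\mathcal{A}_T\mid\mathcal{F}_r]$, with $D_r$ the Malliavin derivative. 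Since $D_r(X_t-X_s)=\mathbf{1}_{s<r\le t}\,\mathrm{Id}$, only pairs with $s<r\le t$ contribute, so $D_r\mathcal{A}_T=-\theta\int_0^r\!\int_r^T f(t-s)(X_t-X_s)|X_t-X_s|^{-\theta-2}\,dt\,ds$; conditioning on $\mathcal{F}_r$ and writing $X_t-X_s=(X_r-X_s)+(X_t-X_r)$ with the second increment an independent $N(0,(t-r)\mathrm{Id})$ vector smooths the gradient into $H_r=\int_0^r\!\int_r^T f(t-s)\,\nabla(P_{t-r}V)(X_r-X_s)\,dt\,ds$, where $V(y)=|y|^{-\theta}$ and $P_\tau$ denotes the heat semigroup (convolution with the $N(0,\tau\mathrm{Id})$ density).

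The mean is computed directly: since $X_t-X_s\sim N(0,(t-s)\mathrm{Id})$ one has $E[|X_t-X_s|^{-\theta}]=(t-s)^{-\theta/2}\,\Gamma[(d-\theta)/2]/(2^{\theta/2}\Gamma(d/2))$ (finite because $\theta<2\le d$), so after the substitution $u=t-s$, $E[\mathcal{A}_T]=\theta^{-1}B_\theta\int_0^T\|f(s)/s^{\theta/2}\|_{1,t}\,dt$. For $1\le\theta<2$ this is exactly the shape of the linear term $B_\theta\int_0^T\|f_*(s)/s^{\theta/2}\|_{1,t}\,dt$ appearing in Theorem \ref{Theorem2}, the remaining factor in the coefficient (nontrivial only for $\theta>1$) being supplied by the martingale step; for $0<\theta\le1$ the mean is instead dominated by a more singular term carrying $s^{-1/2}$ — the $D_\theta$ term of \eqref{second_assertion_theorem_2} — which is of martingale origin and controls the bound.

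The heart of the argument is the martingale bound on $E[\exp(\int_0^T H_r\cdot dW_r)]$, which, as in Theorem \ref{Theorem1}, I would control by the martingale lemma of the appendix, reducing it to an estimate on $\int_0^T|H_r|^2\,dr$ and hence to a new action less singular than the original. The essential input is a heat-semigroup gradient estimate of the form $|\nabla(P_\tau V)(y)|\le c\,\tau^{-1/2}(P_{c'\tau}V)(y)$, which trades one derivative for a factor $\tau^{-1/2}$ while replacing the singular gradient by the smoothed value $P_{c'\tau}V$. Feeding this into $H_r$, carrying out the $dt$ and $ds$ integrations, and bounding $|H_r|^2$ produces a regularized two-time action; a Young-type inequality splitting the resulting product with conjugate exponents $2/(2-\theta)$ and $2/\theta$ is what generates both the exponent $2/(2-\theta)$ and the constant $A_\theta$ (respectively $C_\theta$ for $0<\theta\le1$, where the product is instead split by H\"{o}lder against the accumulated quantities $\int_0^T\|f_*\|_{1,t}\,dt$ and $\int_0^T\|f_*\|_{1,t}^2\,dt$). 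The monotonization $f\mapsto f_*$ enters precisely here: the estimates require the coupling to be non-increasing, and replacing $f$ by $f_*$ is the harmless device that enforces this while only enlarging the bound.

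The step I expect to be the main obstacle is the control of $H_r$ near the temporal diagonal $t\approx r$, where $\tau=t-r\to0$ and the scaling bound $|\nabla(P_\tau V)(y)|\le\kappa_\theta\,\tau^{-(\theta+1)/2}$ becomes non-integrable in $t$ once $\theta\ge1$. Here one cannot discard the $y$-dependence: for $s$ bounded away from $r$ the increment $y=X_r-X_s$ is typically of size $\sqrt{r-s}$, so the near-field decay $|\nabla V(y)|\sim\theta|y|^{-(\theta+1)}$ must be used in tandem with the smoothing estimate, and the extra integration $\int_0^r ds$ over the past must be organized so that the effective coupling at outer time $t$ becomes the accumulated $\|f_*\|_{1,t}$ rather than the pointwise $f_*(t)$ of Theorem \ref{Theorem1} — which is the structural reason the single-time norms are everywhere replaced by their $t$-integrated versions. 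Getting this balance quantitatively sharp, so that the bound stays log-linear in $T$ and the displayed constants come out, is the delicate part; once it is in place, the two cases $1\le\theta<2$ and $0<\theta\le1$ follow by the same computation, differing only in how the final product of norms is split.
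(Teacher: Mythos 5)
Your overall architecture is exactly the paper's: Clark--Ocone plus the Martingale Estimate Lemma, the monotonization $f\mapsto f_*$, Cauchy--Schwarz in the $ds$-variable, a Young splitting that regenerates the original action (absorbed using its a priori finiteness), and the factor $p=\theta$ multiplying the exactly computed mean $E[\mathcal{A}_T]=\theta^{-1}B_\theta\int_0^T\|f(s)/s^{\theta/2}\|_{1,t}\,dt$; your Malliavin computation of $H_r$ is also correct. The genuine gap is in the one place you deviate: replacing the paper's Convolution Lemma by the pointwise estimate $|\nabla(P_\tau V)(y)|\le c\,\tau^{-1/2}(P_{c'\tau}V)(y)$. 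That estimate is true, but it is lossy in the regime $\tau=t-r\gg|y|^2$ -- which, contrary to your diagnosis, is where the difficulty sits, not at the diagonal. For $\tau\gg|y|^2$ one has $\tau^{-1/2}(P_{c'\tau}V)(y)\asymp\tau^{-(1+\theta)/2}$, so $\int_r^T\tau^{-1/2}(P_{c'\tau}V)(y)\,d\tau$ is comparable to $|y|^{1-\theta}$ only when $\theta>1$; at $\theta=1$ it grows like $\log\bigl((T-r)/|y|^2\bigr)$ and for $\theta<1$ like $(T-r)^{(1-\theta)/2}$. Feeding this into $\int_0^T|H_r|^2dr$ produces exponents of order $T\log^2T$ or worse, so log-linearity fails precisely for $\theta\le1$ -- the polaron case, and also the base case on which the second half of the theorem rests. (Your stated worry about $\tau\to0$ is harmless for your own estimate: $\int_0^{|y|^2}\tau^{-1/2}(P_{c'\tau}V)(y)\,d\tau\lesssim|y|^{1-\theta}$ since $P_{c'\tau}V(y)\lesssim|y|^{-\theta}$ in that range.)

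The information your bound discards is that for $\tau\gg|y|^2$ the genuine gradient satisfies $|\nabla(P_\tau V)(y)|\asymp|y|\,\tau^{-1-\theta/2}$ -- the gradient of a widely spread profile is small near its center -- whereas your majorant is $\tau^{-(1+\theta)/2}$, larger by the factor $\sqrt{\tau}/|y|$. The paper's Convolution Lemma retains this: it bounds the time-integrated object all at once,
\begin{equation*}
\left|\int_0^\infty h(\tau)\,\nabla\bigl(P_\tau V\bigr)(y)\,d\tau\right|\;\le\;\frac{2\|h\|_\infty}{d-\theta}\,|y|^{1-\theta},
\end{equation*}
uniformly in the time horizon and for all $0<\theta<2$, with the explicit constant that ultimately yields $A_\theta$. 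With that lemma, $|H_r|\le\frac{2}{d-\theta}\int_0^rf_*(r-s)\,|X_r-X_s|^{1-\theta}\,ds$, and your remaining steps go through essentially verbatim for $1\le\theta<2$. Note also that for $0<\theta<1$ the paper does not rerun the martingale computation with a different H\"{o}lder splitting, as you propose (after Cauchy--Schwarz you would face positive powers $|X_u-X_s|^{2-2\theta}$); it instead applies Young's inequality to the action itself, $f/|x|^\theta\le(1-\theta)\theta^{\theta/(1-\theta)}\varepsilon^{-\theta/(1-\theta)}f+\varepsilon f/|x|$, reducing everything to the already-proved $\theta=1$ case and then optimizing in $\varepsilon$, which is what produces the particular combination of $\int_0^T\|f_*\|_{1,t}\,dt$ and $\int_0^T\|f_*\|_{1,t}^2\,dt$ in \eqref{second_assertion_theorem_2}. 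In summary: for $1<\theta<2$ your route can be closed (with constants worse than $A_\theta$), but as written it cannot prove the theorem at $\theta=1$ or below.
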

\begin{theorem}\label{Theorem3}
If $f$ is a non-negative, measurable function,
\begin{equation}
\sup_{x, y \in \mathbb{R}^d}E^{x, y}\left[\exp\left(\int_0^T\!\!\!\int_0^t\frac{f(t - s)}{|X_t - Y_s|^{\theta}}\,ds\,dt\right)\right]
\end{equation}
is bounded above by
\begin{equation}
\exp\left(2^{-\theta/(2 - \theta)}A_\theta\|f\|_{1, T}^{2/(2 - \theta)}T + 2^{-\theta/2}(1 - \theta/2)^{-1}B_\theta\|f\|_{1, T}T^{1 - \theta/2}\right)
\end{equation}
when $1 \leq \theta < 2$, and by
\begin{gather}
\exp\left(
2^{-\theta/(2 - \theta)}C_\theta\|f\|_{1, T}^{2/(2 - \theta)}T + 
2^{(4 - 3\theta)/2(2 - \theta)}D_\theta\|f\|_{1, T}^{1/(2 - \theta)}T^{1/2}
\right)
\end{gather}
when $0 < \theta \leq 1$.
\end{theorem}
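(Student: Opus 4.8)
The plan is to run the two–step strategy of Section~4.1 on the joint $2d$-dimensional Brownian motion $B=(X,Y)$, in close parallel to the proof of Theorem~\ref{Theorem2}; the only structural change is that the self-interaction distance $|X_t-X_s|$ is replaced by the two-body distance $|X_t-Y_s|$. Writing $\mathcal{A}=\int_0^T\int_0^t f(t-s)|X_t-Y_s|^{-\theta}\,ds\,dt$ for the action, the first step decomposes $\mathcal{A}=E^{x,y}[\mathcal{A}]+\int_0^T H_r\cdot dB_r$ through the Clark--Ocone formula, with $H_r=E^{x,y}[D_r\mathcal{A}\mid\mathcal{F}_r]$ the conditioned Malliavin derivative, and the second step is the martingale/exponential-moment bound controlling $E^{x,y}[\exp(\int_0^T H_r\cdot dB_r)]$ in terms of the quadratic-variation functional $\int_0^T|H_r|^2\,dr$, which is itself a less singular action. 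I would invoke the ancillary lemmas of the appendix that carry out this passage for a singular inverse-power kernel, so that the two ranges $1\le\theta<2$ and $0<\theta\le1$ are treated exactly as in Theorem~\ref{Theorem1}; this is why Theorem~\ref{Theorem3} uses the same coefficient families $A_\theta,B_\theta$ and $C_\theta,D_\theta$.

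The key quantitative input, and the reason Theorem~\ref{Theorem3} comes out cleaner than Theorem~\ref{Theorem2}, is the independence of $X$ and $Y$: for starting points $x,y$ the displacement $X_t-Y_s$ is Gaussian with mean $x-y$ and covariance $(t+s)I_d$, so its variance is the \emph{sum} $t+s$ rather than the self-interaction value $t-s$. First I would record the uniform moment bound $\sup_{x,y}E^{x,y}[|X_t-Y_s|^{-\theta}]=(B_\theta/\theta)(t+s)^{-\theta/2}$, the supremum being attained at $x=y$ because the convolution of the radially decreasing functions $|z|^{-\theta}$ and the Gaussian density is again radially decreasing (a one-line rearrangement argument); the analogous uniform-in-$(x,y)$ estimate on $|H_r|$ then feeds the fluctuation term. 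Because $t+s\ge t$ on the domain $s\in[0,t]$, the kernel never develops the near-diagonal blow-up that forces the $f_*$-rectification in Theorems~\ref{Theorem1}--\ref{Theorem2}; this is precisely why the statement is phrased with $f$ itself rather than $f_*$, and why the supremum is merely bounded (not claimed attained) over $x,y$.

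With these ingredients the remainder is bookkeeping. The expectation-type contribution assembles into the $B_\theta$ (resp.\ $D_\theta$) coefficient and the fluctuation-type contribution into $A_\theta$ (resp.\ $C_\theta$), matching the structure of Theorem~\ref{Theorem1}. To obtain the stated $T$-dependence I would substitute $u=t-s$, use the variance lower bounds $t+s\ge t$ and $t+s\ge 2s$ to extract the powers of two $2^{-\theta/(2-\theta)}$ and $2^{-\theta/2}$ (resp.\ $2^{(4-3\theta)/2(2-\theta)}$), bound $\|f\|_{1,t}\le\|f\|_{1,T}$, and integrate the explicit powers of $t$ so that the outer $\int_0^T dt$ converts into the factors $T$ and $T^{1-\theta/2}$ for $1\le\theta<2$ and $T$ and $T^{1/2}$ for $0<\theta\le1$.

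I expect the main obstacle to be the fluctuation term: obtaining a sharp uniform-in-$(x,y)$ bound on $\int_0^T|H_r|^2\,dr$, where $H_r$ involves the gradient of the singular kernel integrated against $f$, and then propagating the constants through the martingale argument so that the factor-of-two variance improvement is retained rather than lost to crude estimates and assembles into exactly $2^{-\theta/(2-\theta)}A_\theta$ and $2^{-\theta/(2-\theta)}C_\theta$. The expectation term $E^{x,y}[\mathcal{A}]$ is comparatively routine once the Gaussian moment bound above is in place; the delicacy lies entirely in the quadratic-variation estimate and in confirming that each ingredient separately, rather than the full exponential moment, is what must be maximized over the starting configuration.
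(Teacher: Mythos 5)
Your proposal follows the Clark--Ocone/martingale template of Theorems \ref{Theorem1}--\ref{Theorem2}, but that is not how the paper proves Theorem \ref{Theorem3}, and --- more importantly --- the paper itself records (in the Remark immediately following its proof) that the martingale route gives a \emph{worse} coefficient of $T$ than the stated one. The paper's actual proof is a reduction: substitute $u=t-s$, apply Jensen's inequality with the probability density $f(u)\,du/\|f\|_{1,T}$ to pull the $u$-integral outside the exponential, observe that for fixed $u$ one has $X_{u+s}-Y_s = X_u+\sqrt{2}\,W^{(u)}_s$ with $W^{(u)}$ a standard Brownian motion independent of $X_u$, invoke the maximality result of the appendix (Lemma \ref{Maxlemma} and its consequences) to discard the random shift $X_u+(x-y)$, and then apply Theorem \ref{Theorem1} with the \emph{constant} coupling $2^{-\theta/2}\|f\|_{1,T}$. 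Every constant in the statement is exactly what this reduction produces: the $\sqrt 2$ rescaling of the Brownian motion yields $2^{-\theta/2}$ in the effective coupling, hence $2^{-\theta/(2-\theta)}$ after the $2/(2-\theta)$ power in $A_\theta$ (resp.\ $C_\theta$), and the $T$, $T^{1-\theta/2}$, $T^{1/2}$ powers come from Theorem \ref{Theorem1} evaluated at a constant $f$. This is also the real reason no $f_*$ appears: Jensen replaces $f$ by the constant $\|f\|_{1,T}$ at the outset, not because the two-body kernel lacks a near-diagonal singularity.

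The concrete gap in your plan is the step you yourself flag as the ``main obstacle'': retaining the factor-of-two variance improvement through the martingale machinery. It cannot be retained there, for two structural reasons. First, the fluctuation estimate in Theorems \ref{Theorem1}--\ref{Theorem2} rests on the Convolution Lemma (Lemma \ref{Convolutionlemma}), whose bound $2\|h\|_\infty/\theta(d-\theta)$ is completely insensitive to the variance of the smoothing Gaussian; so the fact that $X_t-Y_s$ has variance $t+s$ rather than $t-s$ buys you nothing in the quadratic-variation term, which is where $A_\theta$ and $C_\theta$ originate. Second, the quadratic variation now has two components, $|H_r|^2=|H_r^X|^2+|H_r^Y|^2$: in the self-interaction case of Theorem \ref{Theorem2} the region $\{s,t>u\}$ contributes nothing because $D_uX_t-D_uX_s=1-1=0$, whereas here $D^X_r$ acts on all $t>r$ (any $s$) and $D^Y_r$ on all $s>r$ (any $t$) with no cancellation; this enlarges, rather than halves, the fluctuation term relative to Theorem \ref{Theorem2} (and on the region $\{s,t>r\}$ the na\"{\i}ve application of the Convolution Lemma produces $\esssup$'s of $f$ near $0$, forcing a two-dimensional adaptation of that lemma). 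The expectation term is indeed routine, as you say, but the fluctuation constants you would obtain assemble into something strictly larger than $2^{-\theta/(2-\theta)}A_\theta$ and $2^{-\theta/(2-\theta)}C_\theta$, so the theorem as stated would not follow. The fix is to abandon the martingale argument for this theorem and exploit independence the way the paper does, via Jensen and the rescaled Brownian motion, after which Theorem \ref{Theorem1} does all the work.
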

\end{section}



\begin{section}{Applications}
\begin{subsection}{The hydrogen atom}
Consider the hydrogen atom Hamiltonian
\begin{gather}
H = -\frac{\Delta}{2} - \frac{\alpha}{|x|}
\end{gather}
acting in $L^2({\mathbb R}^3)$, whose ground state energy is equal to
\begin{gather}
-\lim_{T \to \infty}\frac{1}{T}\ln\left\{E\left[\exp\left(\int_0^T\frac{\alpha}{|X_t|}\,dt\right)\right]\right\}.
\end{gather}
It is well known that the value above is $-\alpha^2/2$. A direct application of Theorem \ref{Theorem1} with $f$ the constant 
$\alpha$ and $\theta$ equal to 1 yields the following bound, which is asymptotically sharp for large $T$:
\begin{gather}\label{hydrogen.eq22}
E\left[\exp\left(\int_0^T\frac{\alpha}{|X_t|}\,dt\right)\right] \leq \exp\left[\frac{\alpha^2 T}{2} + \frac{2\sqrt{2}\alpha}{\sqrt{\pi}}T^{1/2}\right].
\end{gather}
It should be stressed that this bound is for all $T$, and not just for large times. Note also that the upper bound captures both the correct long- and short-time behaviors: we have by Jensen's inequality
that
\begin{gather}
E\left[\exp\left(\int_0^T\frac{\alpha}{|X_t|}\,dt\right)\right] \geq \exp\left[E\left(\int_0^T\frac{\alpha}{|X_t|}\,dt\right)\right] = 
\exp\left(\frac{2\sqrt{2}\alpha}{\sqrt{\pi}}T^{1/2}\right).
\end{gather}
After one takes logarithms, this agrees with the right side of \eqref{hydrogen.eq22} to leading order in time $T$ for small $T$.  
\end{subsection}
\begin{subsection}{The singular potential $1/|x|^2$}
Consider the Hamiltonian
\begin{gather}
H = -\frac{\Delta}{2} - \frac{\alpha}{|x|^2}
\end{gather}
acting in $L^2({\mathbb R}^d)$, $d\geq 3$. This Hamiltonian is discussed in an elementary way in \cite{Case, EG}; see also \cite[Sections X.1 and X.2]{Simon2} for a more 
thorough treatment. If $\alpha$ is sufficiently small, one can show by simple arguments (consisting of a completion of the square) that 
$H$ can be rewritten as $A^{\dagger}A$ for some operator $A$, and therefore that $H$ is positive for small $\alpha$. Moreover, since 
$H$ has the scaling property that the replacement $x \mapsto \lambda x$ leads to $\frac{1}{\lambda^2}H$, which is unitarily equivalent 
to $H$, 
the ground state energy ($\inf {\rm spec} (H)$) is zero for small $\alpha$. (This scaling property is unique to this particular potential.) Furthermore, for sufficiently large $\alpha$ one can construct a state with negative expectation under this Hamiltonian, and the scaling property then implies
that $H$ cannot be  bounded below. There is a critical threshold for $\alpha$, $\alpha_c= (d - 2)^2/8$, dividing these two cases.
Our bounds are able to detect this sharp constant, at least in the following sense:  Let $\mathcal{E}_{\theta}^{\alpha}$ be the ground state energy 
of the Hamiltonian
\begin{gather}
-\frac{\Delta}{2} - \frac{\alpha}{|x|^{\theta}}
\end{gather}
for $1 \leq \theta < 2$. By Theorem \ref{Theorem1}, we obtain,
\begin{gather}
\mathcal{E}_{\alpha}^{\theta} \geq - 2^{2(\theta - 1)/(2 - \theta)}(2 - \theta)\theta^{\theta/(2 - \theta)}2^{\theta/(2 - \theta)}
(d - \theta)^{-2\theta/(2 - \theta)}\alpha^{2/(2 - \theta)},
\end{gather}
and therefore (using $\mathcal{E}_{\alpha}^{\theta} \leq 0$)
 \begin{equation}
         \lim_{\theta\rightarrow 2^-} \mathcal{E}_{\alpha}^{\theta} = 0
         \end{equation}
for $\alpha< \alpha_c$.
\end{subsection}
\begin{subsection}{The polaron model}
\label{polaron}
As another simple application of the bounds, consider the 1-electron polaron model, whose ground state energy is inferred from
the large-time behavior of Feynman's functional integral \cite{RF}
\begin{gather}\label{Feynmanint}
E\left[\exp\left(\frac{\alpha}{\sqrt{2}}\int_0^T\!\!\!\int_0^t\frac{e^{-(t - s)}}{|X_t - X_s|}\,ds\,dt\right)\right],
\end{gather}
which, by Theorem \ref{Theorem2}, is bounded by $e^{(\alpha + \alpha^2/4)T}$. This implies that the ground state energy is bounded below by $-\alpha - \alpha^2/4$. We emphasize that the expectation 
is with respect to {\it normalized} Brownian motion and corresponds to the Fr\"{o}hlich Hamiltonian (\ref{Frohham}) having kinetic energy operator 
$-\Delta/2$. This bound is comparable to, and a slight improvement of, Lieb and Yamazaki's bound \cite{LY}, which says that the ground state energy is bounded 
below by $-\alpha - \alpha^2/3$. It is also to be compared with the exact expression for large $\alpha$, $-0.109\alpha^2+o(\alpha^2)$ \cite{DV, LT}, the first term of which 
being the ground state energy of the Pekar functional \cite{M, L}. Also, by Jensen's inequality, and as noted 
by Feynman \cite{RF},
\begin{gather}
E\left[\exp\left(\frac{\alpha}{\sqrt{2}}\int_0^T\!\!\!\int_0^t\frac{e^{-(t - s)}}{|X_t - X_s|}\,ds\,dt\right)\right]
\geq \exp\left(\frac{\alpha}{\sqrt{\pi}}\int_0^T\!\!\!\int_0^t\frac{e^{-(t - s)}}{(t - s)}\,ds\,dt\right)\sim \exp\alpha T\end{gather}
for large $T$, and so to leading order in $\alpha$, $\alpha$ small, the ground state energy is bounded above by $-\alpha$.

The bounds can be applied to estimate the ground state energy of the bipolaron without inter-electronic repulsion (two electrons 
in a crystal lattice; see, for instance \cite[Proof of Lemma 2]{FLST}, where a functional integral representation for this model is used). 
Its Hamiltonian is given by
\begin{gather}
\sum_{n = 1}^2\frac{p_n^2}{2} + \int a_k^{\dagger}a_k\,dk + 
\sum_{n = 1}^2\frac{\sqrt{\alpha}}{2^{3/4}\pi}\int\frac{1}{|k|}(e^{ikx_n}a_k + e^{-ikx_n}a_k^{\dagger})\,dk,
\end{gather}
with functional integral
\begin{gather}
E\left[\exp\left(2\frac{\alpha}{\sqrt{2}}\int_0^T\!\!\!\int_0^t\frac{e^{-(t - s)}}{|X_t - Y_s|}\,ds\,dt + 
\frac{\alpha}{\sqrt{2}}\int_0^T\!\!\!\int_0^t\frac{e^{-(t - s)}}{|X_t - X_s|}\,ds\,dt + 
\frac{\alpha}{\sqrt{2}}\int_0^T\!\!\!\int_0^t\frac{e^{-(t - s)}}{|Y_t - Y_s|}\,ds\,dt\right)\right].
\end{gather}
Using Theorems 2 and 3 and Cauchy-Schwarz, we deduce that the expectation is bounded by
\begin{eqnarray}
\lefteqn{E\left[\exp\left(\frac{4\alpha}{\sqrt{2}}\int_0^T\!\!\!\int_0^t\frac{e^{-(t - s)}}{|X_t - Y_s|}\,ds\,dt\right)\right]^{1/2}
E\left[\exp\left(\frac{2\alpha}{\sqrt{2}}\int_0^T\!\!\!\int_0^t\frac{e^{-(t - s)}}{|X_t - X_s|}\,ds\,dt\right)\right]}\nonumber\\
 &\phantom{XXXXXXXX}\leq& \exp\left(\alpha^2T + \frac{2\alpha}{\sqrt{2}\Gamma(3/2)}T^{1/2}\right)
\exp\left(\left(2\alpha + \alpha^2  \right)T\right)\nonumber\\
&\phantom{XXXXXXXX}=& \exp\left((2\alpha + 2\alpha^2)T + \frac{4\alpha}{\sqrt{2\pi}}T^{1/2}\right),
\end{eqnarray}
which provides the lower bound $-2\alpha - 2\alpha^2$ for the bipolaron ground state energy. This result may be compared
with an upper bound that is implicit in the analysis in \cite{BB1}, provided by the Pekar-Tomasevich functional: $-0.87\alpha^2$. 
For large $\alpha$, 
the disagreement of the lower bound provided here with this upper bound is the same as in the one-polaron case (a factor of about 2.5).
\end{subsection}
\begin{subsection}{The Nelson models} 
The bounds are also applicable to functional integrals arising in the Nelson models.  As noted in 
the introduction, Gubinelli, Hiroshima, and L\"{o}rinczi \cite{GHL} also write the effective action 
for the Nelson models  with ultraviolet cutoff (and with infrared cutoffs for the massless case) using stochastic integrals.  These integrals converge as the ultraviolet cutoff is removed, ultimately resulting
in  the construction of a regularized Hamiltonian.  

Regarding this work, however, we add the following remarks. Their expansion for the action includes one-particle 
self-interaction terms, slightly simplified here for expositional purposes, each of the form
\begin{equation}
                Y^T\equiv \int_0^{T}\!\!\!\int_{(t-\tau)\wedge 0}^t\!\!\nabla_x\varphi(X_t - X_s, t-s)\,ds\,dX_t,
                 \end{equation}
with $\tau>0$ fixed, $X_t$ a three-dimensional Brownian motion,  and with   
\begin{equation}
  \varphi(x,t)\equiv \int_{{\mathbb R}^3}\frac{ e^{-ikx -\omega(k)t}}{2\omega(k)(\omega(k)+k^2/2)} dk,
   \end{equation} 
and dispersion $\omega(k)= \sqrt{\nu^2+k^2}$.  Their Lemma 2.10 in \cite{GHL}  provides
a log-linear bound on the moment-generating functional for $Y^T$. To prove the bound, they first use Girsanov's theorem 
(closely related to the proof of our  Martingale Estimate Lemma; see the next section)
to obtain the bound 
           \begin{equation}
                      E\left[ e^{\alpha Y^T}  \right]\leq E\left[ e^{\gamma Q^T}  \right]^{1/2}
                      \end{equation}
 for a suitable constant $\gamma=\gamma(\alpha,\tau)$, and 
  \begin{equation}
            Q^T \equiv \int_0^T\!\!\!\int_s^{(s + \tau)\wedge T}\!\!|X_t - X_s|^{-\theta}\,dt\,ds,
 \end{equation}
 with $1<\theta<2$.  
 (Analogous expressions are obtained for the interaction terms between particles, but with $X_t-X_s$ replaced by
 $X_t^i -X_s^j$, $i$ and $j$ labeling the particles.)     The authors then estimate $E\left[ e^{\gamma Q^T}  \right] $
 by Jensen's inequality,  
        \begin{equation}\label{Jensenmisuse}
               E\left[ e^{\gamma Q^T}  \right] \leq \frac{1}{T}\int_0^T E\left[ \exp{\left(T\gamma \int_s^{s+\tau} |X_t-X_s|^{-\theta} dt\right)}   \right]\,ds,
\end{equation}
and then state that the right side is log-linearly bounded (see their Eq.(2.39)). But since $\int_s^{s+\tau}|X_t-X_s|^{-\theta}dt$ is  an unbounded random variable, its moment-generating
functional cannot  be log-linearly bounded in $T$; in fact, one can see by scaling and Feynman-Kac that this moment generating
functional behaves as $\exp{(c\tau T^{1/(1-\theta/2)})}$ for a suitable constant $c$, hence so does the right side of
(\ref{Jensenmisuse}).     The situation can be remedied,
however.   Application of  Theorem \ref{Theorem2} with $f(s)= \chi_{[0,\tau]}(s)$
  gives the log-linear bound         
               \begin{equation}
                        E\left[ e^{\gamma Q^T}  \right] \leq \exp{\left(c(1+\gamma^{2/(2-\theta)})T\right) },
                                                \end{equation}
for $c= c(\theta, \tau)$ a suitable constant.  In particular, this argument confirms the authors' log-linear bound
on  the moment-generating functional for $Y^T$ asserted in their Lemma (2.10).

In another application to the Nelson model, a judicious manipulation of the functional integral
\begin{equation}
E\left[\exp\left(\alpha\sum_{m, n = 1}^N\int_0^T\!\!\!\int_0^t\!\!\int_{k \in \mathbb{R}^3}\frac{1}{\omega(k)}e^{ik(X_t^m - X_s^n)}
e^{-\omega(k)(t - s)}\chi_{\Lambda}(k)\,dk\,ds\,dt\right)\right]
\end{equation}
gives a lower bound for the renormalized model, as a function of the number of particles $N$ and the coupling constant $\alpha$. 
More details about this will be provided in a future article.
\end{subsection}
\end{section}



\begin{section}{Proof of the theorems}

In the following, integrals $\int\rho_t dX_t$, with $X_t$  $d$-dimensional Brownian motion, are to be regarded as It\^{o} integrals.  Our analysis will make use of the Malliavin derivative $D_u $.  A brief primer of the derivative acting on elementary functionals and
its extension to a larger class of functionals is included in the appendix. Some necessary real analysis issues are addressed there.  
 We  recall here  that given an ${\mathcal F}_T$-measurable  functional ${\mathcal A}_T$ in $L^2(\Omega)$, there 
exists a unique $\mathbb{R}^d$-valued adapted process $\rho_t$ in $L^2(\Omega\times[0, T])$, the stochastic derivative of ${\mathcal A}_T$, such that
                                     \begin{equation}\label{Decom.eq}
{\mathcal A}_T= E[{\mathcal A}_T] +\int_0^T\rho_t\, dX_t.
\end{equation}

\begin{subsection}{Principal lemmas} 
The following lemma summarizes the key ideas underlying the functional integral bounds.
\begin{lemma}(Martingale Estimate Lemma)
Let a time $T$, $0\leq T<\infty$ be given and let 
 $\mathcal{A}_T$ be a real-valued, ${\mathcal F}_T $-measurable, $L^2$-functional, and  let $\rho=\rho_t$ be its 
stochastic derivative.  
Then, if $p > 1$, 
\begin{gather}\label{GM.eq1}
E\left[e^{\mathcal{A}_T}\right] \leq e^{E[\mathcal{A}_T]}E\left[\exp\left(\frac{p^2}{2(p - 1)}\int_0^T
\rho_t^2\,dt\right)\right]^{1 - 1/p}.
\end{gather}
Moreover, if $\rho$ is in $L^{\infty}(\Omega \times [0, T])$, then
\begin{gather}\label{GM.eq2}
E\left[ e^{\mathcal{A}_T}\right] \leq e^{E[\mathcal{A}_T]}\exp\left(\frac{\|\mathcal{\rho}\|_{\infty, T}^2 T}{2}\right).
\end{gather}
\label{clark}
\end{lemma}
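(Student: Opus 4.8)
The plan is to use the Clark-Ocone decomposition \eqref{Decom.eq} to reduce everything to estimating the exponential of a single stochastic integral. Writing $\mathcal{A}_T = E[\mathcal{A}_T] + \int_0^T \rho_t\,dX_t$ and pulling out the deterministic constant, we have $e^{\mathcal{A}_T} = e^{E[\mathcal{A}_T]}\exp\!\left(\int_0^T \rho_t\,dX_t\right)$, so the whole problem reduces to bounding $E\left[\exp\!\left(\int_0^T \rho_t\,dX_t\right)\right]$. The central device will be the stochastic (Dol\'eans-Dade) exponential $\mathcal{E}_t(\lambda) = \exp\!\left(\lambda\int_0^t \rho_s\,dX_s - \tfrac{\lambda^2}{2}\int_0^t \rho_s^2\,ds\right)$, which by It\^o's formula is a nonnegative local martingale started at $1$. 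The key structural fact I will exploit is that a nonnegative local martingale is a supermartingale, whence $E[\mathcal{E}_T(\lambda)] \leq 1$ for every $\lambda$.

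For the bounded case \eqref{GM.eq2} I would argue directly. When $\rho \in L^{\infty}(\Omega\times[0,T])$, Novikov's condition $E[\exp(\tfrac12\int_0^T \rho_t^2\,dt)] < \infty$ holds trivially, so $\mathcal{E}_T(1)$ is a genuine martingale with $E[\mathcal{E}_T(1)] = 1$. Rewriting $\exp\!\left(\int_0^T \rho_t\,dX_t\right) = \mathcal{E}_T(1)\,\exp\!\left(\tfrac12\int_0^T \rho_t^2\,dt\right)$ and bounding the second factor pointwise by $\exp\!\left(\tfrac12\|\rho\|_{\infty,T}^2 T\right)$, I obtain the claim after restoring $e^{E[\mathcal{A}_T]}$.

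For the general estimate \eqref{GM.eq1} I would use only the supermartingale inequality together with H\"older, so as to avoid any integrability hypothesis on $\rho$. Fixing $p>1$ with conjugate exponent $q = p/(p-1)$, I split
\[
\exp\!\left(\int_0^T \rho_t\,dX_t\right) = \mathcal{E}_T(p)^{1/p}\,\exp\!\left(\frac{p}{2}\int_0^T \rho_t^2\,dt\right),
\]
which one verifies by collecting the $\int\rho\,dX$ and $\int\rho^2\,dt$ terms. H\"older's inequality with exponents $p$ and $q$ then gives
\[
E\left[\exp\!\left(\int_0^T \rho_t\,dX_t\right)\right] \leq E[\mathcal{E}_T(p)]^{1/p}\,E\left[\exp\!\left(\frac{pq}{2}\int_0^T \rho_t^2\,dt\right)\right]^{1/q}.
\]
The first factor is at most $1$ by the supermartingale property, and since $pq/2 = p^2/(2(p-1))$ and $1/q = 1 - 1/p$, the second factor is exactly the quantity in \eqref{GM.eq1}; multiplying by $e^{E[\mathcal{A}_T]}$ completes the bound.

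The main obstacle, and the point I would be most careful about, is the rigorous justification of the (super)martingale properties of $\mathcal{E}_t(\lambda)$ under only the standing hypothesis $\rho \in L^2(\Omega\times[0,T])$. For \eqref{GM.eq2} this is clean, since boundedness yields Novikov immediately. For \eqref{GM.eq1} the subtle but crucial observation is that I never need $\mathcal{E}_T(p)$ to be a true martingale: the inequality $E[\mathcal{E}_T(p)] \leq 1$ holds for any nonnegative local martingale via Fatou applied along a localizing sequence, so no exponential-integrability condition on $\rho$ is needed to control the first H\"older factor. All possible divergence is thereby isolated in the explicit second factor, so the stated inequality is either finite and useful or trivially true with an infinite right-hand side. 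A secondary technical point is confirming that $\int_0^T \rho_t\,dX_t$ is a well-defined It\^o integral and that the almost-sure algebraic splitting is legitimate, both of which follow from $\rho \in L^2(\Omega\times[0,T])$ and the appendix's construction of the stochastic derivative.
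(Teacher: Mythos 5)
Your proof is correct and takes essentially the same route as the paper's: the Clark--Ocone decomposition, H\"older's inequality with exponents $p$ and $p/(p-1)$ splitting off the Dol\'eans--Dade exponential (your $\mathcal{E}_T(p)^{1/p}$ is exactly the paper's first H\"older factor), and the bound $E[\mathcal{E}_T(p)]\leq 1$ via Fatou along a localizing sequence, which the paper carries out explicitly with the stopping times $\tau^{(n)}$ and Novikov's criterion. The only minor variation is your direct treatment of \eqref{GM.eq2} (Novikov plus the genuine martingale property of $\mathcal{E}_T(1)$) in place of the paper's limit $p\to 1^+$ in \eqref{GM.eq1}; both are valid.
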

\begin{proof}
Since ${\cal A}_T$ is  ${\mathcal F}_T$-measurable and in $L^2$, as noted above, it can be decomposed as ${\mathcal A}_T= E[{\mathcal A}_T] +\int_0^T \rho_t \,dX_t$, 
so that by H\"{o}lder's inequality, with $p > 1$,
\begin{eqnarray}\label{lemma1.eq}
\lefteqn{E[e^{\mathcal{A}_T}] = 
E\left[\exp\left(E[\mathcal{A}_T] + \int_0^T {\rho}_t\,dX_t\right)\right]}\nonumber\\
&=& e^{E[\mathcal{A}_T]}E\left[\exp\left(\int_0^T\!\!\!\rho_t\,dX_t - \frac{p}{2}\int_0^T\rho_t^2\,dt\right)
\exp\left(\frac{p}{2}\int_0^T\rho_t^2\,dt\right)\right]\nonumber\\
&\leq& e^{E[\mathcal{A}_T]}E\left[\exp\left(\int_0^T\!\!\!p \rho_t\,dX_t - 
\frac{1}{2}\int_0^T (p\rho_t)^2\,dt\right)\right]^{1/p}
 E\left[\exp\left(\frac{p^2}{2(p - 1)}\int_0^T\rho_t^2\,dt\right)\right]^{1 - 1/p}\!\!\!\!.
\end{eqnarray}
The second factor in the last line is such that
\begin{equation}\label{supermartingale.eq}
E\left[\exp\left(\int_0^{T}\!p \rho_t\,dX_t - 
\frac{1}{2}\int_0^{T} (p\rho_t)^2\,dt\right)\right]\leq 1,
\end{equation}
 giving  the first assertion of the lemma, Ineq.\eqref{GM.eq1}.  To see that this latter inequality holds,
one can use a simple stopping-time argument:  Let $\tau^{(n)}= \tau^{(n)}(\omega)= \inf_{\{t:0\leq t \leq T\}}\{t: \int_0^t\rho_s^2(\omega)\, ds\geq n\}$ or if $ \int_0^T\rho_s^2(\omega)\, ds<n$ let $\tau^{(n)}(\omega)= T$.   Then $\tau^{(n)}\rightarrow T$, a.s., $n\rightarrow\infty$, since $\int_0^T\rho_s^2(\omega)\,ds$ is a.s. finite.
Let $\rho^{(n)}=\rho_s^{(n)}$ be $\rho_s$ up to time $\tau_n$, and  equal to zero thereafter.  Then $M_t^{(n)}= \exp\left(\int_0^t\!p \rho_s^{(n)}\,dX_s - 
\frac{1}{2}\int_0^t (p\rho_s^{(n)})^2\,ds\right)$ is a martingale for each $n$ (by, e.g., Novikov's criterion) and as a consequence has 
expectation 1 for all $t\leq T$. Inequality \eqref{supermartingale.eq} follows from an application of Fatou's lemma; the left side
is bounded by $\lim_{n\rightarrow\infty} E[M_T^{(n)}]= 1$.

The second assertion, inequality \eqref{GM.eq2}, follows by bounding $\rho$ with its $L^{\infty}$-norm and then taking the limit $p \to 1^+$.
\end{proof}
\begin{remark}
Both Nelson himself \cite{N1} and particularly Gubinelli et al. \cite{GHL} (see the proof of their Lemma 2.10) use this 
inequality as well in their analysis of the Nelson model, in the case $p = 2$.
\end{remark}
Also essential to the proofs is the following lemma:  Let $p_t(z)$ denote the $d$-dimensional heat kernel $(2\pi t)^{-d/2}\exp(-z^2/2t)$.

\begin{lemma}{(Convolution Lemma)}\label{Convolutionlemma}
Let $h$ be a bounded measurable function $[0, \infty) \to \mathbb{C}$. Then, for any $0 < \theta < 2$,
\begin{equation}
\int_0^{\infty}  h(t)\int_{\mathbb{R}^d}\frac{p_t(y)(x - y)}{|x - y|^{\theta + 2}}\,dy\,dt = a(\theta, |x|, h)\frac{x}{|x|^\theta},
\label{convolution}
\end{equation}
where
\begin{equation}
|a(\theta, |x|, h)| \leq \frac{2\|h\|_{\infty}}{\theta(d - \theta)}.
\end{equation}
\end{lemma}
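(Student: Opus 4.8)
The plan is to exploit rotational symmetry to reduce the vector identity to a scalar statement, and then to evaluate the constant by recognizing the $t$-integral of the heat semigroup as (essentially) the Green's function of the Laplacian, encoded here through the heat equation itself so as to remain uniform in the dimension.

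First I would write $I(x,t) := \int_{\mathbb{R}^d} p_t(y)(x-y)|x-y|^{-\theta-2}\,dy$ for the inner integral and observe that $I$ is rotation-equivariant: the substitution $y\mapsto Ry$ for $R\in SO(d)$ shows $I(Rx,t)=R\,I(x,t)$, so $I(x,t)=g(t,|x|)\,x/|x|$ for a scalar $g$, and likewise $J(x):=\int_0^\infty h(t)I(x,t)\,dt$ is radial. Hence $J(x)=a(\theta,|x|,h)\,x/|x|^\theta$ for a scalar $a$, which is exactly the asserted form of the identity; dotting with $x/|x|$ gives $a(\theta,|x|,h)=|x|^{\theta-1}\int_0^\infty h(t)\,(\tfrac{x}{|x|}\cdot I(x,t))\,dt$, and it remains only to bound $|a|$.

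The key simplification comes from the gradient representation $(x-y)|x-y|^{-\theta-2}=-\tfrac1\theta\nabla_x|x-y|^{-\theta}$, which gives $I(x,t)=-\tfrac1\theta\nabla_x\Phi(x,t)$ with $\Phi(x,t)=(p_t*|\cdot|^{-\theta})(x)$. Since $\Phi$ is the convolution of two radially non-increasing functions, it is itself radially non-increasing (a standard fact), so $\partial_r\Phi\le0$ and the radial component $\tfrac{x}{|x|}\cdot I(x,t)=-\tfrac1\theta\partial_r\Phi\ge0$. Consequently, for $|h|\le\|h\|_\infty$ the triangle inequality collapses to an equality at the level of the nonnegative radial integrand: $|a(\theta,|x|,h)|\le\|h\|_\infty\,|x|^{\theta-1}\int_0^\infty \tfrac{x}{|x|}\cdot I(x,t)\,dt=\|h\|_\infty\,a(\theta,|x|,1)$. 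Thus the bound reduces to computing the single constant $a(\theta,|x|,1)$. To evaluate it I would use the heat equation rather than an explicit kernel. Writing $J_1(x):=\int_0^\infty I(x,t)\,dt=-\tfrac1\theta\int_0^\infty\nabla_x\Phi\,dt$ and using $\tfrac12\Delta\Phi=\partial_t\Phi$ together with $\Phi(x,0)=|x|^{-\theta}$ and $\Phi(x,t)\to0$ as $t\to\infty$, I would compute $\nabla\cdot J_1=-\tfrac1\theta\int_0^\infty\Delta\Phi\,dt=-\tfrac2\theta\int_0^\infty\partial_t\Phi\,dt=\tfrac2\theta|x|^{-\theta}$. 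On the other hand, with $J_1=b(r)\,x/|x|$ the divergence of this radial field is $r^{1-d}\frac{d}{dr}(r^{d-1}b)$; solving $r^{1-d}(r^{d-1}b)'=\tfrac2\theta r^{-\theta}$ gives $b(r)=\tfrac{2}{\theta(d-\theta)}r^{1-\theta}+Cr^{1-d}$, whence $a(\theta,|x|,1)=|x|^{\theta-1}b(|x|)=\tfrac{2}{\theta(d-\theta)}$ once the homogeneous term is discarded, yielding $|a(\theta,|x|,h)|\le\tfrac{2\|h\|_\infty}{\theta(d-\theta)}$.

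The main obstacles are the analytic justifications behind these formal manipulations. Because $\int_0^\infty\Phi\,dt$ diverges at $t=\infty$ (indeed $\Phi\sim c\,t^{-\theta/2}$) while $J_1$ converges only conditionally when $d=2$ (Brownian motion being recurrent, so that $\int_0^\infty p_t\,dt=\infty$ and the vector cancellation is essential), I must justify interchanging $\nabla$, $\Delta$ and $\int_0^\infty dt$ directly at the level of the convergent field $J_1$, e.g. through the scaling identity $\Phi(x,t)=t^{-\theta/2}F(x/\sqrt t)$ to control the tails and a cutoff/dominated-convergence argument near $t=0$ and $t=\infty$. The second delicate point is ruling out the homogeneous solution $Cr^{1-d}$, which should follow from the regularity and decay of $J_1$ inherited from the integral — for instance $J_1(0)=0$ by odd symmetry, or the decay as $r\to\infty$ read off from the scaling form — forcing $C=0$. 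I expect these convergence and boundary-behavior justifications, rather than the algebra, to be the delicate part; for $d\ge3$ one could alternatively substitute the explicit Green's function $\int_0^\infty p_t(y)\,dt=c_d|y|^{2-d}$, but the heat-equation route has the advantage of covering $d=2$ as well.
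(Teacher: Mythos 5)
Your proposal is correct in substance and arrives at the same sharp constant, but it is a genuinely different proof from the paper's. The paper's argument is one explicit computation: it inserts the subordination identity
\begin{equation}
\frac{1}{|x|^{\theta}} = \frac{(2\pi)^{d/2}}{2^{\theta/2}\Gamma(\theta/2)}\int_0^{\infty}s^{(d - \theta - 2)/2}p_s(x)\,ds,
\end{equation}
uses the gradient representation $(x-y)|x-y|^{-\theta-2} = -\tfrac{1}{\theta}\nabla_x |x-y|^{-\theta}$ (the one step you share), collapses the $y$-integral by the semigroup law $p_t * p_s = p_{t+s}$, and then rescales $t \mapsto |x|^2 t$, $s \mapsto |x|^2 s$ to exhibit the factor $x/|x|^{\theta}$; the bound on $a$ follows from an exact evaluation of the resulting positive double integral via the substitution $u = 1/(2(t+s))$ and Gamma-function identities. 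Your route replaces this computation with structure: rotational equivariance gives the radial form, the symmetric-decreasing property of $p_t * |\cdot|^{-\theta}$ gives nonnegativity of the radial component (so $|a(\theta,|x|,h)| \leq \|h\|_{\infty}\, a(\theta,|x|,1)$ -- the same role positivity of the integrand plays in the paper), and the heat equation, the divergence identity, and exact homogeneity of degree $1-\theta$ pin down $a(\theta,|x|,1) = \tfrac{2}{\theta(d-\theta)}$; note the scaling argument alone already forces $C=0$ in your ODE, which is the cleanest way to dispose of the homogeneous solution. What the paper's computation buys is that every interchange is of Tonelli type or differentiation under an explicitly Gaussian integral, so essentially nothing beyond Fubini needs justification; what your argument buys is conceptual clarity (the constant is identified as the Poisson-equation constant for the source $\tfrac{2}{\theta}|x|^{-\theta}$, uniformly in $d \geq 2$) at the cost of the analytic work you correctly flag -- justifying $\nabla \cdot \int_0^{\infty}(\cdot)\,dt = \int_0^{\infty}\Delta(\cdot)\,dt$ away from the origin, which can be done by dominating $\partial_t \Phi(x,t)$ via the scaling form $\Phi = t^{-\theta/2}F(x/\sqrt{t})$ and the large-$|z|$ asymptotics of $F$. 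Both proofs share the same caveat, which neither you nor the paper should be charged for relative to the other: when $\theta + 1 \geq d$ (i.e.\ $d=2$, $\theta \geq 1$) the defining $y$-integral is only conditionally convergent and must be read as $-\tfrac{1}{\theta}\nabla_x\bigl(p_t * |\cdot|^{-\theta}\bigr)(x)$, exactly as both proofs do.
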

\begin{proof}
 It can be directly verified that
\begin{equation}
\frac{1}{|x|^{\theta}} = \frac{(2\pi)^{d/2}}{2^{\theta/2}\Gamma(\theta/2)}\int_0^{\infty}s^{(d - \theta - 2)/2}p_s(x)\,ds,
\end{equation}
and so,
\begin{eqnarray}
\lefteqn{\int_0^\infty h(t)\int_{\mathbb{R}^d}\frac{p_t(y)(x - y)}{|x - y|^{\theta + 2}}\,dy\,dt = 
-\frac{1}{\theta}\int_0^{\infty} h(t)\nabla_x\int_{\mathbb{R}^d}\frac{p_t(y)}{|x - y|^{\theta}}\,dy\,dt}\nonumber\\
&= &-\frac{(2\pi)^{d/2}}{2^{\theta/2}\Gamma(\theta/2)\theta}\int_0^{\infty} h(t)
\nabla_x\int_{\mathbb{R}^d}\int_0^\infty p_t(y)p_s(x - y)s^{(d- \theta - 2)/2}\,ds\,dy\,dt\nonumber\\
&=& -\frac{(2\pi)^{d/2}}{2^{\theta/2}\Gamma(\theta/2)\theta}\int_0^{\infty} h(t)\nabla_x\int_0^\infty s^{(d - \theta - 2)/2}
p_{t + s}(x)\,ds\,dt\nonumber\\
&= &\frac{(2\pi)^{d/2}x}{2^{\theta/2}\Gamma(\theta/2)\theta}\int_0^{\infty} h(t)
\int_0^\infty \frac{s^{(d - \theta - 2)/2}}{t + s}p_{t + s}(x)\,ds\,dt\nonumber\\
&=& \frac{(2\pi)^{d/2}}{2^{\theta/2}\Gamma(\theta/2)\theta}\frac{x}{|x|^{\theta}}
\int_0^{\infty}\int_0^\infty h(t|x|^2)\frac{s^{(d - \theta - 2)/2}}{t + s}p_{t + s}(1)\,ds\,dt
\equiv a(\theta, |x|, h)\frac{x}{|x|^{\theta}},
\end{eqnarray}
where the substitutions $t \mapsto |x|^2 t$ and $s \mapsto |x|^2 s$ were made in the last line. Now, $a(\theta, |x|, h)$ may 
be bounded in absolute value from above as
\begin{gather}
\frac{\|h\|_{\infty}}{2^{\theta/2}\Gamma(\theta/2)\theta}\int_0^\infty\int_0^\infty
\frac{s^{(d - \theta - 2)/2}e^{-1/(2(t + s))}}{(t + s)^{d/2 + 1}}\,dt\,ds,
\end{gather}
and, with the substitution $u = 1/(2(t + s))$ for $t$, this becomes
\begin{eqnarray}
\lefteqn{\frac{2^{d/2}\|h\|_{\infty}}{2^{\theta/2}\Gamma(\theta/2)\theta}\int_0^\infty\int_0^{1/(2s)}
s^{(d - \theta - 2)/2}e^{-u}u^{d/2 - 1}\,du\,ds}\nonumber\\
&= &\frac{2^{d/2}\|h\|_{\infty}}{2^{\theta/2}\Gamma(\theta/2)\theta}\int_0^\infty\int_0^{1/(2u)}s^{(d - \theta - 2)/2}
e^{-u}u^{d/2 - 1}\,ds\,du\nonumber\\
&=& \frac{2^{d/2}\|h\|_{\infty}}{2^{\theta/2}\Gamma(\theta/2)\theta}\int_0^\infty\frac{2e^{-u}u^{d/2 - 1}}{(d - \theta)(2u)^{(d- \theta)/2}}\,du 
= \frac{2\|h\|_{\infty}}{\theta(d - \theta)},
\end{eqnarray}
which proves the assertion.
\end{proof}
\end{subsection}

\begin{subsection}{Proofs of the theorems}

\begin{subsubsection}{Proof of Theorem \ref{Theorem1}}
Let $\mathcal{A}_T = \mathcal{A}_T(f) = \int_0^T f(t)/|X_t + x|^{\theta}\,dt$. Since $f \leq f_*$ almost everywhere, 
$E[\exp(A_T(f))] \leq E[\exp(A_T(f_*))]$; we can thus assume that $f$ is non-increasing. We may also assume that $f$ is 
bounded. This last condition will be relaxed at the end of the proof. With these assumptions, 
$f = f_*$ and $\|f\|_{\infty, T} < \infty$.

Note that $\mathcal{A}_T$ is an $L^2(\Omega)$-functional; the multiple integral $E[\mathcal{A}_T^2]$, which has a non-negative integrand, can be estimated 
using the Markov property of Brownian motion and the maximality lemma \ref{Maxlemma} of the appendix.  
The computations are similar to those in the proof of the Convolution Lemma and so are omitted.
The main idea of the proof then is to use the Martingale Estimate Lemma, first computing the
expectation of
$\mathcal{A}_T$  and then estimating its conditioned Malliavin derivative. 
 
Let $^*$ denote the symmetric decreasing rearrangement operation, applied to a function \cite{LL}. (This is not to be confused with $_*$, although 
they are somewhat related, since in one dimension the outcomes can look similar in some cases.) 
Then, since for positive functions $u, v$ $\int uv \leq \int u^* v^*$, and by Fubini,
\begin{eqnarray}
\lefteqn{E\left[\int_0^T\frac{ f(t)}{|X_t+x|^\theta}\,dt\right] =\int_0^T f(t)E\left[\frac{1}{|X_t + x|^\theta}\right]\,dt = 
\int_0^T f(t)\int_{{\mathbb R}^d}\frac{1}{|y + x|^{\theta}}p_t(y)\,dy\,dt}\nonumber\\
&\leq&
\int_0^T f(t)\int_{{\mathbb R}^d}\left(\frac{1}{|y + x|^{\theta}}\right)^*p_t^*(y)\,dy\,dt
= \int_0^T f(t)\int_{{\mathbb R}^d}\left(\frac{1}{|y|^{\theta}}\right)p_t(y)\,dy\,dt \nonumber\\
& = &\displaystyle\frac{\Gamma\left(\frac{d - \theta}{2}\right)}{2^{\theta/2}\Gamma(d/2)}\int_0^T\frac{f(t)}{t^{\theta/2}}\,dt.
\end{eqnarray}

In the remainder, we will first assume that $1 \leq \theta < 2$. The stochastic derivative of $\mathcal{A}_T$, $\rho$, 
is computed as its conditioned Malliavin derivative, here a $d$-dimensional vector, which may be estimated as
\begin{eqnarray}\label{Dbound.eq}
\lefteqn{\left|E\left[D_u{\mathcal A}_T |\mathcal{F}_u\right]\right|=\left|E\left[D_u\int_0^T\frac{f(t)}{|X_t + x|^{\theta}}\,dt\Bigg|{\mathcal F}_u\right]\right|}\nonumber\\
&= &
\left|\theta\int_u^T f(t)\int_{\mathbb{R}^d}\frac{(X_u + x) - y}{|(X_u + x) - y|^{\theta + 2}}p_{s - u}(y)\,dy\,dt\right|\nonumber\\
&= &\left|\theta\int_0^{T - u}f(t + u)\int_{\mathbb{R}^d}\frac{(X_u + x) - y}{|(X_u + x) - y|^{\theta + 2}}p_{t}(y)\,dy\,dt\right|\nonumber\\
&\leq& \frac{2\|f(\cdot + u)\|_{\infty, T - u}}{(d - \theta)|X_u + x|^{\theta - 1}} = \frac{2f(u)}{(d - \theta)|X_u + x|^{\theta - 1}}.
\end{eqnarray}
In going from the third to the fourth line, we have used the Convolution Lemma. Going from the first line to the second 
is formally clear, but this step requires justification for bringing the derivative $D_u$ through the $\int dt $- integral , and then differentiating
a singular integrand. The argument involves mollifying the integrand and approximating the integral by a Riemann integral.  
See the appendix for a sketch of  the argument. Therefore, with $C \equiv 2p^2/(p - 1)(d - \theta)^2$, we get, from the Martingale Estimate Lemma,
\begin{equation}
E[e^{\mathcal{A}_T}] \leq e^{E[\mathcal{A}_T]}E\left[\exp\left(C\int_0^T\frac{f(t)^2}
{|X_t + x|^{2\theta - 2}}\,dt\right)\right]^{1 - 1/p}.
\label{first_bound_martingale_estimate}
\end{equation}

Now, we  use Young's inequality, $ab \leq a^q/q + b^r/r$ for non-negative $a$ and $b$, and $q$ and $r$ conjugate ($q^{-1} + r^{-1} = 1$) as follows:
\begin{gather}
\frac{Cf(t)^2}{|X_t + x|^{2\theta - 2}} = \frac{Cf(t)^{1/q + 1}}{r^{1/r}}\cdot\frac{f(t)^{1/r}r^{1/r}}{|X_t + x|^{2\theta - 2}}
\leq \frac{C^q f(t)^{q + 1}}{q r^{q/r}} + \frac{f(t)}{|X_t + x|^{(2\theta - 2)r}}.
\end{gather}
Then, with $(2\theta - 2)r = \theta$ and $q$ equal to the conjugate of $r$, we obtain
\begin{gather}
E[e^{\mathcal{A}_T}]
\leq e^{E[\mathcal{A}_T]}E\left[\exp\left(D_{\theta}\int_0^T f(t)^{2/(2 - \theta)}\,dt\right)
\exp\left(\int_0^T\frac{f(t)}{|X_t + x|^{\theta}}\,dt\right)
\right]^{1 - 1/p}
\end{gather}
with $D_{\theta}= \frac{2-\theta}{\theta} \left(\frac{\theta}{2(\theta-1)}\right)^{-2(\theta-1)/(2-\theta)}C^{\theta/(2 - \theta)}$, ($D_1= C$),  i.e., 
\begin{gather}
E[e^{\mathcal{A}_T}] \leq e^{p E[\mathcal{A}_T]}\exp\left[(p - 1)D_{\theta}\|f^{2/(2 - \theta)}\|_{1,T}\right].
\end{gather}
In this last step it was assumed that $E[e^{\mathcal{A}_T}]$ is finite. It is easy to see why this is true. First note that if $\theta = 1$, 
then equation \eqref{first_bound_martingale_estimate} shows that $E[e^{\mathcal{A}_T}]$ is finite. If $1 < \theta < 2$, a finite number 
of applications of the bound \eqref{first_bound_martingale_estimate} will lower the exponent in the denominator to a number in $[0, 1]$. 
If the number lies in $(0, 1)$, an argument involving Young's inequality (see below, in this same proof) will raise the exponent to 1, 
in which case finiteness will be concluded. Setting now $p = \theta$, in order to minimize $(p - 1)D^{\theta/(2 - \theta)}$, completes 
the proof of the theorem for $\theta \geq 1$.

If $0< \theta < 1$, we compare with the $\theta=1$ case by using Young's inequality, as mentioned above, this time picking 
$q = 1/(1 - \theta)$, $r = 1/\theta$, $a = f^{1/q}/(\varepsilon^\theta r^{1/r})$, $b = f^{1/r}r^{1/r}\varepsilon^{\theta}/|X_t + x|^{\theta}$, to get
\begin{eqnarray}
\lefteqn{E\left[\exp\left(\int_0^T\frac{f(t)}{|X_t + x|^{\theta}}\,dt\right)\right]}\nonumber\\
&&\leq \exp\left(\frac{(1-\theta)\theta^{\theta/(1-\theta)}}{ \varepsilon^{\theta/(1-\theta)}} \int_0^Tf(t)\,dt\right)
E\left[\exp\left(\varepsilon\int_0^T\frac{f(t)}{|X_T+x|}\,dt\right)\right].
\end{eqnarray}
  But by the $\theta=1$-case above, this is in turn bounded by
\begin{gather}
  \exp\left(\frac{(1 - \theta)\theta^{\theta/(1 - \theta)}}{\varepsilon^{\theta/(1 - \theta)}}\|f\|_{1,T} + \frac{2\varepsilon^2}
{(d - 1)^2}\|f\|_{2, T}^2 + \frac{\Gamma((d - 1)/2)\varepsilon}{2^{1/2}\Gamma(d/2)}\int_0^T\frac{f(t)}{t^{1/2}}\,dt\right).
\label{pre_optimization_theorem_1}
\end{gather}
This concludes the proof in the case of a non-increasing, bounded function $f$. Suppose now that $f$ is again non-increasing, but this time 
unbounded. Then $f$ must be infinite on an interval $[0, S]$, and finite otherwise. If $S > 0$, then all the norms on the right hand side of the inequality we are proving 
are infinite, and therefore the assertion is true. If $S = 0$, then one may truncate $f$ by considering the family of functions 
$f_a(t) = f(t)$ if $a \leq  t \leq T$ and $f_a(t) = f(a)$ if $0 \leq t \leq a$. Then $f_a$ is bounded and non-increasing for all $0 < a \leq T$. The 
assertion then follows by applying the result for bounded and non-increasing functions and then taking the limit $a \to 0^+$. This is justified 
by monotone convergence. The final part of the theorem follows from an optimization in $\varepsilon$ of the first two terms in equation \eqref{pre_optimization_theorem_1} 
(leading to $\varepsilon = \theta^{1/(2 - \theta)}\left((d - 1)^2\|f\|_{1, T}/(4\|f\|_{2, T}^2)\right)^{(1 - \theta)/(2 - \theta)}$). 
The statement concerning maximality is proven in the appendix.
\qed
\end{subsubsection}


\begin{subsubsection}{Proof of Theorem \ref{Theorem2}}
The proof in this case is very similar to that of the previous theorem. Here the action is given by $\mathcal{A}_T=\int_0^T\!\!\int_0^t\frac{f(t - s)}{|X_t - X_s|^{\theta}}\,ds\,dt$. 
We again assume that $f$ is non-increasing and bounded. This restriction may be eliminated in the way detailed in the previous proof, and an 
explanation of how this is done in this case will be omitted. We will again consider first the case $1 \leq \theta < 2$ and then $0 < \theta < 1$. 
One can check that $\mathcal{A}_T$ is indeed in $L^2(\Omega)$. The use of the Martingale Estimate Lemma is again the main component of the proof.
We then determine $E[\mathcal{A}_T]$:
\begin{gather}
E[\mathcal{A}_T] = \int_0^T\!\!\!\int_0^t f(t - s)E\left(\frac{1}{|X_t - X_s|^{\theta}}\right)\,ds\,dt = 
\int_0^T\!\!\!\int_0^t f(t - s)\int_{{\mathbb R}^d}\frac{1}{|y|^{\theta}}p_{t - s}(y)\,dy\,ds\,dt\nonumber
\end{gather}
\begin{eqnarray}
& = & \frac{1}{(2\pi)^{d/2}}\int_0^T\!\!\!\int_0^t \frac{f(t - s)}{(t - s)^{\theta/2}}\int_{{\mathbb R}^d}\frac{e^{-y^2/2}}{|y|^{\theta}}\,dy\,ds\,dt\nonumber\\
& = & \frac{|\mathbb{S}^{d - 1}|}{(2\pi)^{d/2}}\int_0^T\!\!\!\int_0^t\frac{f(t - s)}{(t - s)^{\theta/2}}\int_0^{\infty}e^{-r^2/2}r^{d - 1 - \theta}\,dr\,ds\,dt\nonumber\\
& = & \frac{\Gamma[(d - \theta)/2]|\mathbb{S}^{d - 1}|}{2^{\theta/2 + 1}\pi^{d/2}}\int_0^T\!\!\!\int_0^t\frac{f(t - s)}{(t - s)^{\theta/2}}\,ds\,dt\nonumber\\
& = & \frac{\Gamma[(d - \theta)/2]|\mathbb{S}^{d - 1}|}{2^{\theta/2 + 1}\pi^{d/2}}\int_0^T\!\!\!\int_0^t\frac{f(s)}{s^{\theta/2}}\,ds\,dt\nonumber\\
& = &\frac{\Gamma[(d - \theta)/2]}{2^{\theta/2}\Gamma(d/2)}\int_0^T\!\!\!\int_0^t\frac{f(s)}{s^{\theta/2}}\,ds\,dt
\equiv K\int_0^T\!\!\!\int_0^t\frac{f(s)}{s^{\theta/2}}\,ds\,dt.
\end{eqnarray}
Next, we proceed to compute the stochastic derivative $\rho$ of $\mathcal{A}_T$, which, as before, will be equal to $E[D_u\mathcal{A}_T|\mathcal{F}_u]$:
\begin{eqnarray}\label{Conveq.12}
\lefteqn{E[D_u\mathcal{A}_T|\mathcal{F}_u] = 
-\theta\int_u^T\!\!\!\int_0^u f(t - s)E\left[\frac{(X_t - X_s)}{|X_t - X_s|^{\theta + 2}}\right|\mathcal{F}_u\Bigg]\,ds\,dt}\nonumber\\
&=& -\theta\int_u^T\!\!\!\int_0^u f(t - s)E\left[\frac{(X_t - X_u) + (X_u - X_s)}{|(X_t - X_u) + (X_u - X_s)|^{\theta + 2}}\right|
\mathcal{F}_u\Bigg]\,ds\,dt\nonumber\\
&=& -\theta\int_u^T\!\!\!\int_0^u f(t - s)\int_{{\mathbb R}^d}\frac{y + (X_u - X_s)}{|y + (X_u - X_s)|^{\theta + 2}}p_{t - u}(y)\,dy\,ds\,dt\nonumber\\
&=& \theta\int_0^u\!\!\!\int_0^{T - u}f(t + u - s)\int_{{\mathbb R}^d}\frac{(X_s - X_u) - y}{|(X_s - X_u) - y|^{\theta + 2}}p_t(y)\,dy\,dt\,ds\nonumber\\
&=& \theta\int_0^u\|f(\cdot + u - s)\|_{\infty, T - u}a(\theta, |X_s - X_u|, 1_{[0, T - u]})\frac{X_s - X_u}{|X_s - X_u|^{\theta}}\,ds,
\end{eqnarray}
where the convolution lemma was used in the last line. The first line requires justification: 
The derivative has been brought through the double integral and then acted upon a singular
integrand. The step is justified by considering limiting mollified actions analogous to those discussed in the appendix for the single
integral case.  Proceeding, we use the inequality
\begin{gather}
\|f(\cdot + u - s)\|_{\infty, T - u} = \esssup_{t \in [u - s, T - s]}f(t) \leq f_*(u - s) = f(u - s),
\end{gather}
and Cauchy-Schwarz, to get
\begin{gather}
\big|E[D_u\mathcal{A}_T|\mathcal{F}_u]\big| \leq \frac{2}{d - \theta}\int_0^u\frac{f(u - s)}{|X_u - X_s|^{\theta - 1}}\,ds 
\leq \frac{2}{d - \theta}\|f\|_{1, u}^{1/2}\left(\int_0^u\frac{f(u - s)}{|X_u - X_s|^{2\theta - 2}}\,ds\right)^{1/2}.
\end{gather}
The Martingale Estimate Lemma tells us then that
\begin{eqnarray}
\lefteqn{E\left[\exp\left(\int_0^T\!\!\!\int_0^t\frac{f(t - s)}{|X_t - X_s|^{\theta}}\,ds\,dt\right)\right]}\nonumber\\
&\leq& \exp\left(K\int_0^T\!\!\!\int_0^t\frac{f(s)}{s^{\theta/2}}\,ds\,dt\right)E\left[\exp\left(\frac{2p^2}{(d - \theta)^2(p - 1)}
\int_0^T\!\!\!\int_0^u\|f\|_{1, u}\frac{f(u - s)}{|X_u - X_s|^{2\theta - 2}}\,ds\,du\right)\right]^{1 - 1/p}\nonumber\\
&\equiv &\exp\left(K\int_0^T\!\!\!\int_0^t\frac{f(s)}{s^{\theta/2}}\,ds\,dt\right)E\left[\exp\left(L\int_0^T\|f\|_{1, u}\int_0^u
\frac{f(u - s)}{|X_u - X_s|^{2\theta - 2}}\,ds\,du\right)\right]^{1 - 1/p},
\end{eqnarray}
which shows in particular that $E[\exp(\mathcal{A}_T)]$ is finite, as noted in the proof of the previous theorem. 
We now proceed as before, using Young's inequality, assuming that $\theta\geq 1$:
\begin{eqnarray}
L\int_0^T\|f\|_{1, u}\int_0^u\frac{f(u - s)}{|X_u - X_s|^{2\theta - 2}}\,ds\,du&\!\!\!\leq&\!\!\! \frac{2-\theta}{\theta} \left(\frac{\theta}{2(\theta-1)}\right)^{-2(\theta-1)/(2-\theta)}L^{\theta/(2 - \theta)}\int_0^T\|f\|_{1, u}^{2/(2 - \theta)}\,du\nonumber\\
&&\phantom{XXXX} +\int_0^T\!\!\!\int_0^u\frac{f(u - s)}{|X_u - X_s|^{\theta}}\,ds\,du.
\end{eqnarray}
We obtain finally
\begin{eqnarray}
\lefteqn{E\left[e^{{\mathcal A}_T}\right] =E\left[\exp\left(\int_0^T\!\!\!\int_0^t\frac{f(u - s)}{|X_t - X_s|^{\theta}}\,ds\,dt\right)\right]}\nonumber\\
&\leq& \exp\left(pK\int_0^T\|f(s)/s^{\theta/2}\|_{1, t}\,dt + (p - 1)c_{\theta}L^{\theta/(2-\theta)}\int_0^T\|f\|_{1, t}^{2/(2-\theta)}\,dt\right)
\end{eqnarray}
with $c_{\theta}= \frac{2-\theta}{\theta} \left(\frac{\theta}{2(\theta-1)}\right)^{-2(\theta-1)/(2-\theta)}$, ($c_1= 1$). 
We pick $p= \theta$ to minimize $(p-1)L^{\theta/(2-\theta)}$, and the result follows, again for $\theta\geq 1$.

Now, assume that $0 < \theta < 1$. Again applying Young's inequality to the action and comparing with the $\theta = 1$ case, we get
\begin{eqnarray}
\lefteqn{E\left[e^{\mathcal{A}_T}\right] = E\left[\exp\left(\int_0^T\!\!\!\int_0^t\frac{f(t - s)}{|X_t - X_s|^{\theta}}\,ds\,dt\right)\right]}\nonumber\\
&\leq&\exp\left((1-\theta)\theta^{\theta/(1-\theta)}\varepsilon^{-\theta/(1 - \theta)}\int_0^T\|f\|_{1, t}\,dt\right)
E\left[\exp\left(\varepsilon\int_0^T\!\!\!\int_0^t\frac{f(t - s)}{|X_t - X_s|}\,ds\,dt\right)\right]\nonumber\\
&\leq&\exp\left\{(1-\theta)\theta^{\theta/(1-\theta)}\varepsilon^{-\theta/(1-\theta)}\int_0^T\|f\|_{1, t}\,dt
+ \frac{\Gamma((d - 1)/2)}{2^{1/2}\Gamma(d/2)}\varepsilon\int_0^T\|f(s)/s^{1/2}\|_{1, t}\,dt\right.\nonumber\\
&& \qquad\,\, + \left.\frac{2\varepsilon^2}{(d-1)^2}\int_0^T\|f\|_{1, t}^2\,dt\right\}.
\end{eqnarray}
The parameter $\varepsilon$ is now chosen to minimize the sum of the first two terms ($\varepsilon$ having the 
same numerical coefficient as the $\varepsilon$ of the previous proof). This completes the proof for $0 < \theta < 1$.
\qed
\end{subsubsection}


\begin{subsubsection}{Proof of Theorem \ref{Theorem3}}

We proceed as in the proof of \cite[Lemma 2]{FLST}: With a simple change of variable $u = t - s$,
\begin{gather}
E\left[\exp\left(\int_0^T\!\!\!\int_0^t\frac{f(t - s)}{|X_t - Y_s + (x - y)|^{\theta}}\,ds\,dt\right)\right] = 
E\left[\exp\left(\int_0^T f(u)\int_0^{T - u}\frac{ds}{|X_{u + s} - Y_s + (x - y)|^{\theta}}\,du\right)\right]\nonumber\\
= E\left[\exp\left(\frac{1}{\|f\|_{1, T}}\int_0^T f(u)\int_0^{T - u}\frac{\|f\|_{1, T}}{|X_{u + s} - Y_s + (x - y)|^{\theta}}\,ds\,du\right)\right]\nonumber\\
\leq \frac{1}{\|f\|_{1, T}}\int_0^T f(u)E\left[\exp\left(\int_0^T\frac{\|f\|_{1, T}}{|X_{u + s} - Y_s + (x - y)|^{\theta}}\,ds\right)\right]\,du,\label{markov}
\end{gather}
where Jensen's inequality is used from the second to the last line.
This last line can be written as
\begin{eqnarray}
\lefteqn{\frac{1}{\|f\|_{1, T}}\int_0^Tf(u)E\left[\exp\left(\int_0^T\frac{\|f\|_{1, T}}{|X_u + \left((X_{u+s}-X_u) - Y_s\right) + (x - y)|^{\theta}}\,ds\right)\right]\,du}\nonumber\\
&\equiv& \frac{1}{\|f\|_{1, T}}\int_0^Tf(u)E\left[\exp\left(\int_0^T\frac{\|f\|_{1, T}}{|X_u + \sqrt 2 W^{(u)}_s + (x - y)|^{\theta}}\,ds\right)\right]\,du,
\end{eqnarray}
where $W^{(u)}_s\equiv \frac{1}{\sqrt 2}(X_{u+s} - X_u - Y_s)$ is, for fixed $u$, a standard $d$-dimensional Brownian motion 
independent of $X_u$. By the maximality result in the second part of the appendix, the above is bounded simply by
\begin{equation}
E\left[\exp\left(\frac{\|f\|_{1, T}}{2^{\theta/2}}\int_0^T\frac{ds}{|W^{}_s|^{\theta}}\right)\right].
\end{equation}
An application of Theorem \ref{Theorem1} completes the proof of Theorem \ref{Theorem3}, with the role of $f$ played 
by the constant $2^{-\theta/2}\|f\|_{1, T}$ on the interval $[0,T]$.\qed
\begin{remark}
This theorem can certainly be proved in the manner of the first two theorems. To give a glimpse of what 
is involved, the expectation term in the Clark-Ocone formula can be computed by a direct application of the Hardy-Littlewood-Sobolev inequality:
   \begin{gather}
  E^{x, y}\left[\mathcal{A}_T(f,X,Y,\theta)\right] = E\left[\int_0^T\!\!\!\int_0^t\frac{f(t - s)}{|X_t - Y_s + (x - y)|^{\theta}}\,ds\,dt\right]\nonumber\\
= \int_0^T\!\!\!\int_0^t\!\!\iint\frac{f(t - s)p_t(a)p_s(b)}{|(a - b) + (x - y)|^{\theta}}\,da\,db\,ds\,dt 
= \int_0^T\!\!\!\int_0^t\!\!\iint\frac{f(t - s)p_t(a - x)p_s(b - y)}{|a - b|^{\theta}}\,da\,db\,ds\,dt\nonumber\\ 
\leq C_{HLS}(\theta)\int_0^T\!\!\!\int_0^tf(t - s) \| p_{t} \|_{p} \| p_s\|_{p }\,ds\,dt
 = C_{HLS}(\theta) \,p^{-d/p} (2\pi)^{-d/q} \int_0^T\!\!\!\int_0^t\frac{f(t - s)}{t^{d/2q} s^{d/2q}}\,ds\,dt,
\end{gather}
where $C_{HLS}(d,\theta)$ is the Hardy-Littlewood-Sobolev constant \cite{LL} with $p$ determined by 
$2/p +\theta/d=2$, $q$ is conjugate to $p$, and $p_t$ is the heat kernel.   The conditioned Malliavin derivative
of the action can be estimated  similarly.  But the resulting  estimate of the coefficient of $T$ in the exponential
is smaller (better) via the Jensen's inequality approach used here than via  the martingale approach, the former proof making better
use of the independence of the two Brownian motions. For this reason, and because of its simplicity, we present only the proof given above which does, however, use Theorem \ref{Theorem1}.
\end{remark}
\end{subsubsection}
\end{subsection}
\begin{subsection}{Remarks on alternative approaches to double integral estimates}
There are alternative approaches to estimating the moment-generating functions for the actions considered in
this article, for instance the self-attracting case
\begin{equation}
       \mathcal{A}_T = \int_0^T\!\!\!\int_0^t\frac{f(t-s)}{|X_t-X_s|^\theta}\,ds\,dt
\label{self_attracting}
              \end{equation}
(the action of Theorem \ref{Theorem2}), particularly methods drawing on the work on self-intersecting (renormalized) local times \cite{V} for planar Brownian motion, c.f.,  Le Gall \cite{LG1, LG2}, and Marcus and Rosen \cite{MR}. Le Gall's method, adapted to our
purposes,              
 begins by decomposing the upper triangular region of integration $R\equiv \{(s,t):\, 0\leq s\leq t\leq T\}$ into an infinite union
of disjoint dyadic squares
               \begin{equation}
                       R= \bigcup_{n\geq 1}\bigcup_{k=1}^{2^{n-1}} R_{n,k},
                       \end{equation}
with 
\begin{equation}
  R_{n,k} = \left\{(s,t):  \frac{(2k-2)T}{2^n}< s\leq \frac{(2k-1)T}{2^n},\,\,\,\frac{(2k-1)T}{2^n}<t\leq \frac{2kT}{2^n}\right\},
\end{equation}                     
$k= 1,2, ...,2^{n-1}$, so that
in an obvious notation,
\begin{eqnarray}
\mathcal{A}_T & = & \sum_{n\geq 1}\sum_{k = 1}^{2^{n - 1}}{\cal A}(R_{n,k})\nonumber\\
 &\equiv & \sum_{n\geq 1}\sum_{k=1}^{2^{n-1}}\iint\limits_{R_{n,k}}\frac{f(t-s)}{|X_t - X_s|^\theta}\,ds\,dt.
\end{eqnarray}
Note that for fixed $n$, the ${\cal A}(R_{n,k})$'s, $k= 1,...,2^{n-1}$ are i.i.d.  Assuming a suitable (approximate) scaling
relation for $f$, one can then relate  ${\cal A}(R_{n,k})$ to ${\cal A}(R_{1,1})$.  An application of H\"{o}lder's
inequality for an infinite product gives an upper bound on $E\left[\exp {\cal A}_T\right]$ in terms of the moment-generating function $E\left[\exp(\alpha {\cal A}(R_{1,1}))\right] $ for ${\cal A}(R_{1,1})$.  
Estimating the latter expectation, in which $s$ and $t$ essentially stay away from each other, presents simpler, although non-trivial, tasks.  For 
$f$ integrable, one can use conditioning with respect to ${\cal F}_{T/2}$, Feynman-Kac for 
Schr\"{o}dinger operators, and then operator and differential equation methods to estimate $E\left[\exp {\cal A}_T\right] $. Note that Le Gall also provides a {\it lower} bound 
for the exponential moment of $\mathcal{A}_T$. By contrast, the stochastic integral
approach provides a fairly direct route for bounding $E\left[\exp {\cal A}_T\right]$ from above, particularly
its log-linear behavior in $T$. (We would like to thank D. Brydges for calling our attention to this alternative approach.)

In the event that the function $f$ extends to a symmetric ($f(t)= f(-t)$) positive definite function, 
one may construct a simple {\it lower} bound to the exponential moment of the self-attracting action \eqref{self_attracting}, 
valid for large time $T$ and agreeing at least in the large coupling regime with the upper bound presented above (Theorem \ref{Theorem2}). Since $1/|x|^{\theta}$ is itself positive definite,
it follows from Young's inequality that
                    \begin{eqnarray}
          \lefteqn{ {\cal A}_T = \int_0^T\!\!\!\int_0^t \frac{f(t-s)}{|X_t-X_s|^{\theta}} \,ds\,dt}\nonumber\\
          &\geq &\int_{y\in {\mathbb R}^d} \int_0^T\!\!\!\int_0^T\frac{f(t-s)\xi(y)}{|X_t-y|^{\theta}} dy\,ds\,dt\nonumber\\&& -\frac{1}{2}  \int_0^T\!\!\!\int_0^Tf(t-s)\,ds\,dt \int_{y\in {\mathbb R}^d}\int_{x\in {\mathbb R}^d} \frac{\xi(x)\xi^*(y)}{|X_t-X_s|^{\theta}}  dx\,dy\nonumber\\
          &=& 2\|f\|_1\int_0^T\!\!\!\int_{y\in {\mathbb R}^d} \frac{\xi(y)}{|X_t-y|^{\theta}} dy\,dt -\|f\|_1 T\int_{y\in {\mathbb R}^d}\int_{x\in {\mathbb R}^d} \frac{\xi(x)\xi^*(y)}{|X_t-X_s|^{\theta}}  dx\,dy +o(T)             \end{eqnarray}
for any $\xi$ (and with $\|f\|_1 =\int_0^{\infty}|f(t)| dt$). Thus, by Feynman-Kac, 
\begin{eqnarray}
\lefteqn{E\left[ e^{{\cal A}_T}  \right] }\nonumber\\
& \geq& \exp\left(-\|f\|_1 T\int_{y\in {\mathbb R}^d}\int_{x\in {\mathbb R}^d} \frac{\xi(x)\xi^*(y)}{|X_t-X_s|^{\theta}}  dx\,dy +o(T)        \right) E\left[ \exp\left(-\int_0^T V_{\xi}(X_t)\,dt\right)   \right]\nonumber\\
   &\simeq& \exp{(- T{\cal E}_{f,\theta}}),
\end{eqnarray}
where $V_{\xi}(x)= 2\|f\|_1\frac{1}{|\cdot|^{\theta}}*\xi(x)$, $\xi$ is chosen optimally (ultimately as $\psi^2$, see below), 
and ${\cal E}_{f,\theta}$ is given by a Pekar functional in complete analogy with the optical polaron case (alluded to in subsection \ref{polaron}),
\begin{equation}
 {\cal E}_{f,\theta}= \inf_{\{\psi: \|\psi\|_2= 1\}}\left(\frac{1}{2}\int_{{\mathbb R}^d}
|\nabla \psi|^2 dx-\|f\|_1\int_{{\mathbb R}^d}\int_{{\mathbb R}^d}\frac{\psi(x)^2 \psi(y)^2}{|x-y|^{\theta}}\,dx\,dy
\right).\end{equation}
We note here that ${\cal E}_{f,\theta}$ scales as $\|f\|_1^{2/(2-\theta)}$, in agreement with
the scaling of the first term in the upper bound provided by Theorem 2.2. (We thank E. Lieb for this argument
as it pertains to the polaron case. See also \cite{DV}, where positive definiteness of $f$ was used to obtain a lower bound 
on the functional integral.)
\end{subsection}
\vspace{10pt}
\begin{acknowledgment}
This material is in part based upon work supported by the National Science Foundation under Grant No. 0932078000 while LET was in residence at the Mathematical Sciences Research Institute in Berkeley, California, during the Fall 2015 semester.
\end{acknowledgment}
\end{section}



\begin{appendix}
\setcounter{section}{1}
\setcounter{equation}{0}
\begin{section}*{Appendix}
\begin{subsection}{Introduction to Malliavin calculus and the Clark-Ocone formula}
The following is intended as a self-contained introduction to the elements of Malliavin calculus used throughout the article. A standard 
reference for it is \cite{N}. The framework consists of the finite time horizon Wiener space 
$(\Omega, \mathcal{F}_T, P)$, with $\Omega = C([0, T])$; $\mathcal{F}_T = \sigma(X_t : 0 \leq t \leq T)$, where $X$ is the standard $d$-dimensional 
Brownian motion on $\Omega$, namely $X_t(\omega) = \omega(t)$; and $P$ is Wiener measure. A real-valued 
function $F$ in $L^2(\Omega)$ is called a {\it Brownian functional.} For a Brownian functional $F$ which is
${\mathcal F}_T$ measurable, there is a unique $\mathbb{R}^d$-valued, adapted $L^2$-process $\rho = \rho_t$ such that
\begin{gather}
F =  E[F]+ \int_0^T \rho_t\,dX_t.
\label{clark_ocone}
\end{gather}
This, in a sense, is a ``fundamental theorem of calculus'' for Brownian functionals. We call $\rho_t$ the stochastic derivative of $F$. The 
theorem is stated and proved in standard references on stochastic analysis, such as \cite[Section 3.4]{KS1}; see also \cite{N} and \cite{KS2}.

Despite the appeal of the formula \eqref{clark_ocone}, it is not clear at the outset how one can calculate the stochastic derivative of a given function $F$ in $L^2(\Omega)$. 
There is, however, a particular class of functions for which it is possible to explicitly compute $\rho_t$, which we now describe. 
Let $H$ be the Hilbert space of square integrable $\mathbb{R}^d$-valued functions $H = L^2([0,T],dt)$. Define the mapping $H \to L^2(\Omega)$ given by
\begin{gather}
g \mapsto W(g) \equiv \int_0^T g_t\,dX_t.
\label{fundamental_theorem}
\end{gather}
This mapping is an isometry onto a closed subspace of $L^2(\Omega)$ and defines a Gaussian process on $H$. In particular,
\begin{gather}
E\left[W(g)W(h)\right] = (g, h)_H.
\end{gather}
Armed with these notions, we now define $\mathcal{S}$ as the space of functions in $L^2(\Omega)$ of the form
\begin{gather}
F = f\left(W(g_1), \ldots, W(g_m)\right),
\label{L_2_function_1}
\end{gather}
for some smooth function $f: \mathbb{R}^m \to \mathbb{R}$ with polynomial growth and functions $g_1, \ldots, g_m$ in $H$. We
define the Malliavin derivative operator $D$ acting on functions in  $\mathcal{S}$ and taking values in $L^2(\Omega \times [0, T])$ as follows: For $F$ of the form \eqref{L_2_function_1},  $DF$ evaluated at a given time $t$
is given by
\begin{gather}
D_tF = \sum_{i = 1}^n\frac{\partial f}{\partial x_i}\left(W(g_1), \ldots, W(g_m)\right)g_i(t).
\label{formula_derivative}
\end{gather}
 It can be shown that $D$ is a closable (unbounded) operator 
with domain $\mathcal{S}$. Finally, let us define $\mathbb{D}$ as the subpace of $L^2(\Omega)$ given by the square integrable 
functions $F$ such that there is a sequence $F_j$ in $\mathcal{S}$ that converges to $F$ in $L^2$, and also with the property that $DF_j$ is Cauchy 
(and therefore converges to an element in $L^2(\Omega \times [0, T])$). $\mathbb{D}$ amounts to the completion of $\mathcal{S}$ under the inner product
\begin{equation}
(F, G) = E[FG] + E[(DF, DG)_H].
\end{equation}
The fact that $D$ is closable as an operator from $\mathcal{S}$ to $L^2(\Omega)$ implies that $D$ can be extended as an unbounded operator to all of 
$\mathbb{D}$ as follows: $DF = \lim_{j \to \infty}DF_j$, where $F_j$ is any sequence such that $F_j \to F$ in $L^2$ and $DF_j$ is $L^2(\Omega \times [0,T])$-Cauchy.

We are now in a position to state the formula that concerns us in this paper: For any $F$ in $\mathbb{D}$,
\begin{gather}\label{Bigeqn}
F = E(F) + \int_0^T E(D_t F|\mathcal{F}_t)\,dX_t,
\end{gather}
which provides a means for computing its stochastic derivative. This identity is called the {\it Clark-Ocone formula.}

The definition for a stochastic derivative given above, \eqref{formula_derivative}, is not of direct use
in computing the stochastic derivatives of the actions encountered in this work, 
since they are typically not in $\mathcal{S}$. In all cases, however, 
the stochastic derivatives can be rigorously calculated. We sketch now, as an example, the computation of the stochastic derivative 
of a typical action we encounter in the text,
\begin{equation}
\mathcal{A}_T = \int_0^T\frac{dt}{|X_t|^{\theta}},
\end{equation}
which is in $L^2(\Omega)$ for $\theta<2$. Noting the obvious $ D^i_t X_s = 1$ for $t<s$,  $=0$ otherwise,  $i= 1,2,..,d$, the Malliavin derivative of ${\cal A}_T$ is given formally by 
     \begin{equation}\label{appendixeq.92}
         D_t{\mathcal A}_T= -\theta \int_t^T\frac{X_s}{|X_s|^{\theta +2}}ds, 
     \end{equation}
and is in $L^2(\Omega\times [0,T])$ for $\theta < d/2$, which may be seen by means of the Convolution Lemma. For $\theta$ in this range, 
the derivative above is not merely formal but actually the right answer. This is 
justified if one considers the limit  $\varepsilon\rightarrow 0$ of mollified actions ${\cal A}_T(\varepsilon) = 
\int_0^T{(X_t^2+\varepsilon^2)^{-\theta/2}}\,dt$; the derivative comes through the integration in the mollified 
integrals by approximating them as Riemann sums. This implies, in particular, that $\mathcal{A}_T$ is in $\mathbb{D}$ when $\theta < d/2$. 
However, if $\theta \geq d/2$, $D_t{\mathcal A}_T$, as computed formally in 
\eqref{appendixeq.92}, is not in $L^2(\Omega\times [0,T])$ -- this can be verified again using the Convolution Lemma. 
Nevertheless, the conditioned  derivative $E[D_t{\mathcal A}_T|{\cal F}_t ]$, and which
with some abuse of notation we continue to write in this manner, is again obtained as an $L^2$-limit of mollified derivatives, and 
is given by 
             \begin{equation}
               -\theta \int_{y\in {\mathbb R}^d}\int_t^T\frac{y+X_t}{|y+X_t|^{\theta +2}} p_{s-t}(y)\,ds\,dy.   
              \end{equation}
This expression is bounded by $const \times |X_t|^{-\theta+1}$, again from the Convolution Lemma, and is indeed in $L^2(\Omega \times [0, T])$
for $\theta<  2$.  Eq.(\ref{Bigeqn}) holds as well for ${\mathcal A}_T$ via a straightforward
dominated convergence argument  using  mollified actions. All other derivative computations in the text may also be justified in this way.
\end{subsection}

\begin{subsection}{Maximality of functional integrals at $x = 0$}

This section of the appendix provides a proof of  the maximality of
\begin{gather}\label{appendix1.eq}
E^x\left[\exp\left(\int_0^T \frac{f(t)}{|X_t|^{\theta}}\,dt\right)\right]
\end{gather}
at $x = 0$, when $0 \leq \theta < 2$, $f$ non-negative.  We will actually prove this for more general functionals. See Glimm and Jaffe \cite{GJ} for basic formulae for Gaussian integrals.  
\begin{lemma} \label{Maxlemma}
   Let $A$ and $B$ be  $d$-dimensional positive definite matrices.  Let
              \begin{equation}
                F(x)\equiv \int_{{\mathbb R}^d} \exp{\left( -\langle y,Ay\rangle -\langle(x+y),B (x+y)\rangle       \right)} \, dy.
                \end{equation}    
Then $\ln(F(x))$ is quadratic  and concave in $x$  and maximal at $x=0$.           
          \end{lemma}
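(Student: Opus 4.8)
The plan is to evaluate the Gaussian integral explicitly by completing the square in the variable $y$, which will immediately exhibit $\ln(F(x))$ as a quadratic function of $x$ whose Hessian is manifestly negative definite. First I would combine the two quadratic forms in the exponent into a single one: writing out
\[
\langle y, Ay\rangle + \langle x+y, B(x+y)\rangle = \langle y,(A+B)y\rangle + 2\langle y, Bx\rangle + \langle x, Bx\rangle,
\]
I would complete the square in $y$ about the point $y_0 = -(A+B)^{-1}Bx$. Since $A+B$ is positive definite (being a sum of positive definite matrices), it is invertible and the shift $y \mapsto y - y_0$ is a legitimate change of variables leaving $dy$ invariant.

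After the shift, the $y$-integral factors off as a constant Gaussian integral equal to $\pi^{d/2}/\sqrt{\det(A+B)}$, independent of $x$, and the residual $x$-dependent exponent is
\[
-\langle x, Bx\rangle + \langle Bx, (A+B)^{-1}Bx\rangle = -\langle x, \bigl(B - B(A+B)^{-1}B\bigr)x\rangle.
\]
Thus $\ln(F(x)) = \text{const} - \langle x, Mx\rangle$ with $M = B - B(A+B)^{-1}B$, which is quadratic in $x$. Concavity and maximality at $x=0$ then reduce to showing that $M$ is positive definite.

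The main obstacle — really the only nontrivial point — is verifying positivity of $M = B - B(A+B)^{-1}B$. I would handle this by factoring $M = B^{1/2}\bigl(I - B^{1/2}(A+B)^{-1}B^{1/2}\bigr)B^{1/2}$, so it suffices to show the middle factor is positive definite, i.e.\ that $B^{1/2}(A+B)^{-1}B^{1/2} \prec I$. Setting $N = B^{-1/2}AB^{-1/2}$, which is positive definite, one computes $B^{1/2}(A+B)^{-1}B^{1/2} = (N+I)^{-1}$, and $(N+I)^{-1} \prec I$ holds precisely because $N \succ 0$ gives $N + I \succ I$ and hence $(N+I)^{-1} \prec I$. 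Equivalently, and perhaps more cleanly, one recognizes $M$ as the Schur-complement-type expression arising when marginalizing a jointly Gaussian density, which is automatically a covariance-inverse block and therefore positive definite; I would likely present the direct spectral argument above since it is self-contained. With $M \succ 0$ established, $\ln(F(x))$ is a downward quadratic, hence strictly concave, with unique maximum at its critical point $x = 0$, completing the proof.
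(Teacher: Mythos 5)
Your proposal is correct and follows essentially the same route as the paper: complete the square in $y$, integrate out the Gaussian factor, and reduce the claim to positive definiteness of $B - B(A+B)^{-1}B$. The only cosmetic difference is in that last step, where the paper invokes the operator inequality $B \leq A+B \Rightarrow (A+B)^{-1} \leq B^{-1}$ while you prove the same fact directly via conjugation by $B^{\pm 1/2}$; these are interchangeable, as your spectral argument is the standard proof of that monotonicity.
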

\begin{proof}
       By completing the square, one finds that
       \begin{equation}
       F(x) = \pi^{d/2}\left(\det(A+B)    \right)^{-1/2}\exp\left(\langle Bx, (A+B)^{-1} Bx\rangle -\langle x, B, x\rangle\right).
       \end{equation}
Since $A$ and $B$ are positive definite, $B\leq A+B$,  $(A+B)^{-1}\leq B^{-1}$, and so
$B(A+B)^{-1}B -B$ is negative definite. 
\end{proof}

Call $h(x)$ {\it Gaussian definite} provided that $h$ has an integral representation
      $$h(x) = \int_0^{\infty} \hat{h}(s) e^{-s x^2}ds,$$
with $\hat{h}(s)$ non-negative. More generally $h$ could be of the form
      $$h(x) = \int  e^{-\langle x,A_sx\rangle}d\mu(s),$$
with $\left\{A_s\right\}$ a family of positive definite matrices, $\mu$ a positive measure.
Then
\begin{lemma}
    Suppose that $h_1, h_2, ...h_n$ are Gaussian definite functions of $x\in {\mathbb R}^d$.  Then
    \begin{equation}
    F(x)=\int_{{\mathbb R}^d}\exp(-\langle y,Ay\rangle)\prod_{i = 1}^n h_i(x+y)\,dy
\end{equation}
is maximal at $x= 0$.  
\end{lemma}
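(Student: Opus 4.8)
The plan is to reduce the assertion to the two‑matrix case already established in Lemma~\ref{Maxlemma}, using the integral representation that defines Gaussian definiteness as a closure‑under‑positive‑superposition property. First I would write each factor in its defining form,
\[
h_i(x+y) = \int \exp\!\left(-\langle (x+y), A^{(i)}_{s_i}(x+y)\rangle\right)\, d\mu_i(s_i),
\]
with each matrix $A^{(i)}_{s_i}$ positive definite and each $\mu_i$ a positive measure. Substituting these into the product and using that a product of exponentials is the exponential of the sum, the integrand of $F$ becomes $\exp\!\left(-\langle y, Ay\rangle - \langle (x+y), B(x+y)\rangle\right)$, where $B = B(s_1,\dots,s_n) := \sum_{i=1}^n A^{(i)}_{s_i}$ is a sum of positive definite matrices and hence itself positive definite for every choice of the parameters.

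Next I would interchange the $y$‑integration with the $n$ parameter integrations. Since the full integrand $\exp(-\langle y, Ay\rangle)\prod_i h_i(x+y)$ is non‑negative, this interchange is justified by Tonelli's theorem with no further estimates. This yields
\[
F(x) = \int\!\cdots\!\int G_B(x)\, d\mu_1(s_1)\cdots d\mu_n(s_n), \qquad G_B(x) := \int_{\mathbb{R}^d}\exp\!\left(-\langle y, Ay\rangle - \langle (x+y), B(x+y)\rangle\right)\, dy,
\]
and $G_B$ is precisely the function $F$ of Lemma~\ref{Maxlemma} with the second matrix taken to be $B$. Positive definiteness of $A$ guarantees that $A+B$ is positive definite, so each $G_B$ converges and the lemma applies verbatim.

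By Lemma~\ref{Maxlemma}, for each fixed value of $(s_1,\dots,s_n)$ one has $G_B(x) \le G_B(0)$ for all $x$. Integrating this pointwise inequality against the positive product measure $d\mu_1\cdots d\mu_n$ preserves it, giving $F(x) \le F(0)$, which is the assertion. (If $F(0)=\infty$ the bound is trivial, so no separate finiteness hypothesis on $F$ is needed.)

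The only technical point requiring care is the Tonelli interchange, but because every factor in sight is non‑negative this is immediate, and I do not anticipate a genuine obstacle. The conceptual weight is carried entirely by Lemma~\ref{Maxlemma}: Gaussian definiteness is exactly the property that lets the single‑matrix concavity statement propagate through finite products and through positive‑measure superpositions, so the maximum at $x=0$ is inherited by $F$.
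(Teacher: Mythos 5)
Your proof is correct and is essentially the paper's own argument: the paper's proof consists of the single sentence ``Write the product using the integral representations for the $h_i$'s and apply Fubini,'' which is exactly your reduction. You have merely filled in the details the paper leaves implicit --- summing the quadratic forms to get a single positive definite matrix $B$, invoking Lemma \ref{Maxlemma} pointwise, and justifying the interchange by Tonelli.
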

\begin{proof} Write the product using the integral representations for the $h_i$'s and apply Fubini. \end{proof}

\begin{theorem}
Let $V$ be a Gaussian definite potential. For any measurable, non-negative function $f$,
\begin{equation}
\sup_{x \in \mathbb{R}^d}E^x\left[\exp\left(\int_0^T f(t)V(X_t)\,dt\right)\right] = 
E\left[\exp\left(\int_0^T f(t)V(X_t)\,dt\right)\right].
\end{equation}
\end{theorem}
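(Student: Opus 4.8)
The plan is to discretize the time integral and reduce the claim to the finite-dimensional statement already established in Lemma A.2 (the product-of-Gaussian-definite-functions lemma). First I would approximate the continuous functional by a Riemann-type sum: for a partition $0 = t_0 < t_1 < \cdots < t_N = T$, replace $\int_0^T f(t) V(X_t)\,dt$ by $\sum_{j=1}^N c_j V(X_{t_j})$ with non-negative weights $c_j = f(t_j)(t_j - t_{j-1})$. The key observation is that the joint law of $(X_{t_1}, \ldots, X_{t_N})$ under $E^x$ is that of $x$ plus a centered Gaussian vector, so that
\begin{equation}
E^x\left[\exp\left(\sum_{j=1}^N c_j V(X_{t_j})\right)\right] = \int_{(\mathbb{R}^d)^N} \exp\left(-\tfrac{1}{2}\langle z, \Sigma^{-1} z\rangle\right) \prod_{j=1}^N \exp\left(c_j V(x + z_j)\right)\, dz,
\end{equation}
where $z = (z_1, \ldots, z_N)$ records the Brownian increments and $\Sigma$ is the (block-diagonal in the increment coordinates) covariance.

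Next I would verify that each factor $\exp(c_j V(x + z_j))$ is a Gaussian definite function of the combined variable in the sense of the definition preceding Lemma A.2. Since $V$ is Gaussian definite, $V = \int e^{-s\,(\cdot)^2} d\nu(s)$ with $\nu \geq 0$, and the map $V \mapsto e^{cV}$ preserves Gaussian definiteness because exponentials of Gaussian definite functions expand, via the series $e^{cV} = \sum_{k \ge 0} (cV)^k/k!$, into sums and products of Gaussian definite functions, and products of functions of the form $e^{-s y^2}$ are again of that form (their exponents add). This is the step I expect to carry the real content: confirming that the class of Gaussian definite functions is closed under the nonlinear operation $V \mapsto e^{cV}$ for $c \geq 0$, so that the integrand genuinely matches the hypotheses of Lemma A.2 after one changes variables so that the quadratic form $\langle z, \Sigma^{-1} z\rangle$ and the shift by $x$ sit in the form $\langle y, A y\rangle$ and $h_i(x + y)$ demanded there.

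With that closure property in hand, Lemma A.2 applies directly and yields that the discretized expectation, as a function of $x$, is maximal at $x = 0$. I would then pass to the limit as the mesh of the partition tends to zero. The integrand $\exp(\int_0^T f(t) V(X_t)\,dt)$ is the monotone or dominated limit of the discretized exponentials along a suitable sequence of refining partitions (using continuity of $t \mapsto V(X_t)$ along Brownian paths and the integrability already guaranteed by the hypotheses, e.g.\ the $L^2$ control underlying Theorem \ref{Theorem1}); taking limits preserves the pointwise inequality $E^x[\cdots] \leq E^0[\cdots]$, and the reverse inequality is trivial since the supremum is attained at some point. The main obstacle is the interchange of limit and supremum together with the convergence of the discrete sums to the continuous integral inside the exponential; I would handle this by dominated convergence, noting that the exponent is a monotone limit of its Riemann approximants when $f \geq 0$ and $V \geq 0$, which makes the limiting inequality immediate without delicate uniformity arguments.
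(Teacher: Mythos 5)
There is a genuine gap, and it sits exactly where you placed your confidence: the passage from the discretized functionals back to the continuum one. The theorem is designed for singular Gaussian definite potentials --- the paper invokes it precisely for $V(x)=|x|^{-\theta}$, $0\le\theta\le 2$, to get the maximality claim in Theorem \ref{Theorem1} --- and for such $V$ every one of your discretized expectations is identically $+\infty$: for any fixed time $t_j>0$ and any weight $c_j>0$,
\begin{equation}
E^x\left[\exp\left(c_j\,|X_{t_j}|^{-\theta}\right)\right]=\int_{\mathbb{R}^d}p_{t_j}(y-x)\,e^{c_j|y|^{-\theta}}\,dy=\infty,
\end{equation}
because $e^{c|y|^{-\theta}}$ fails to be locally integrable near $y=0$ (it blows up faster than any power). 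The time integral is finite only because the path spends little time near the singularity; evaluating $V$ at finitely many fixed times destroys exactly this averaging. Consequently your inequality between discretizations reads $\infty\le\infty$ (vacuously true, by Lemma A.2 or otherwise), and no limiting argument can extract the continuum inequality from it: there is no dominating integrable function, and the discrete expectations do not converge to the continuous one (which Theorem \ref{Theorem1} shows is finite). The subsidiary steps also have problems: Riemann sums $\sum_j f(t_j)(t_j-t_{j-1})V(X_{t_j})$ need not converge to $\int_0^T f(t)V(X_t)\,dt$ for merely \emph{measurable} $f$ (and they are certainly not monotone under refinement), so the ``monotone or dominated limit'' claim is unjustified even for bounded $V$. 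Your closure claim --- that $V\mapsto e^{cV}$ preserves Gaussian definiteness --- is essentially sound (the $k=0$ term of the series produces a constant, i.e.\ the degenerate exponent $s=0$, which must either be admitted into the representation or split off and handled separately in Lemma A.2), but it is not the step that fails.

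The paper's proof avoids all of this by never discretizing and never exponentiating $V$: it Taylor-expands $\exp\left(\int_0^T f(t)V(X_t)\,dt\right)$ and applies Tonelli (everything is non-negative), reducing the claim to the termwise statement that each moment $E^x\left[V(X_{t_1})\cdots V(X_{t_n})\right]$, written as $C\int e^{-\langle y,Ay\rangle}V(y_1+x)\cdots V(y_n+x)\,dy$ over the Gaussian law of the increments, is maximal at $x=0$ --- which is exactly Lemma A.2. Only polynomial (not exponential) moments of $V$ appear, each term is maximal at $x=0$ whether finite or infinite, and summing the series preserves the inequality with no interchange of limits to justify. If you want to salvage your route, you would need to first prove the statement for bounded Gaussian definite potentials and then exhaust $V$ by an increasing family such as $c\,(x^2+\varepsilon^2)^{-\theta/2}$ (which is bounded, Gaussian definite, and increases to $|x|^{-\theta}$ as $\varepsilon\downarrow 0$), concluding by monotone convergence; but at that point the series argument is both shorter and more general.
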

\begin{proof}
Expanding the exponential as a power series  yields an infinite sum of integrals of terms of the form
\begin{gather}
E\left(V(X_{t_1})\ldots V(X_{t_n})\right) = 
C\int e^{-(y, Ay)}V(y_1 + x)\ldots V(y_n + x)\,dy
\end{gather}
for some $0 < t_1 < \ldots < t_n < T$, constant $C$, and positive matrix $A$. The result then follows from the previous lemma.
\end{proof}
In particular, we have that $|x|^{-\theta}$ is Gaussian-definite for $0 \leq \theta \leq 2$, hence the 
maximality at $x=0$ of expectations of the sort (\ref{appendix1.eq}) considered in this article.
\end{subsection}

\begin{subsection}{Note on the harmonic oscillator}
Here we provide a brief account of the 1-dimensional harmonic oscillator, and how our method can be modified to compute exactly the functional integral 
yielding its ground state energy. The result is a Cameron-Martin formula of which several proofs 
exist: see Cameron and Martin \cite{CM} (1945), and also \cite{KS1, RY}. The derivation we provide here is
simply an alternate route to their result.

       The harmonic oscillator Hamiltonian is given by $H= -\frac{1}{2}\frac{d^2}{dx^2} +\frac{\omega^2}{2}x^2$ with eigenvalues
 $E_n= (n+1/2)\omega$, $n= 0,1,...$. Its Feynman-Kac action is thus
             \begin{equation}\label{harmonic.05}
     S_T = -\frac{\omega^2}{2}\int _0^{T}X_t^2 dt.
     \end{equation}
Expanding this action via Clark-Ocone gives for example $E\left[ S_T  \right]  =   -\frac{\omega^2T^2}{4} $, which
is not log-linear,  growing as $T^2$ rather than $T$ and unlikely to be helpful in estimating the ground
state energy.    But, with the quadratic potential, one can solve explicitly a Hamilton-Jacobi-like equation, albeit
in integral form,  for
the stochastic integral integrand $\rho_T(s,\omega)$ given by the integral equation
          \begin{equation}\label{harmonic.1}
              S_T = \int_0^T\rho_T(s) dX_s -\frac{1}{2}\int_0^T \rho_T(s)^2 ds + f(T)
              \end{equation}
              so that 
            \begin{eqnarray}
                    E\left[ e^{S_T}\right] &=& e^{f(T)} E\left[ \exp{\left(\int_0^T\!\!\!\rho_T(s) dx(s) -\frac{1}{2}\int_0^T \!\!\!\rho_T(s)^2 ds \right)}\right]\nonumber\\
                    &=& e^{f(T)},
                    \end{eqnarray}
with the ground state energy $E_0=-\lim_{T\rightarrow\infty}\frac{f(T)}{T} $.                    
                  
  Writing Eqs.(\ref{harmonic.05},
\ref{harmonic.1}) in a Clark-Ocone expansion, we have that
  \begin{eqnarray}\label{harmonic.2}
         \lefteqn{ E\left[ D_s S_{T}|{\cal F}_s\right] = -\omega^2(T-s)X_s}
         \nonumber\\
         &=& \rho_T(s)- \int_s^T\!\!\!\, E\left[ \rho_T(t)D_s\rho_T(t)|{\cal F}_s \right]dt,
          \end{eqnarray}         
and then, having determined $\rho_T(s)$, we set
        \begin{equation}\label{harmonic.3}
           f(T)= E\left[S_T  \right]  +\frac{1}{2}  E\left[ \int_0^T\!\!\! \rho_T(s)^2 ds  \right].
      \end{equation}
Because of the simple form of the quadratic potential, we make the ansatz $\rho_T(s) \equiv r(s,T)X_s$, with  $r(s,T)$ deterministic, so
that 
(\ref{harmonic.2}) reduces to a deterministic integral equation
 \begin{eqnarray}\label{harmonic.2.5}
        -\omega^2(T-s)  &=& r(s,T)- \int_s^T\!\!\!\, r(t,T)^2dt
          \end{eqnarray}
with solution given by $r(s,T)= \omega \tanh{\omega(s-T)}$. (Note that the integral equation \eqref{harmonic.2.5} 
provides the boundary condition $r(T,T)= 0$.)  Then $f(T)$ is readily computed:
\begin{equation}\label{harmonic.4}
                f(T)= -\frac{1}{2}\ln{ (\cosh{{(\omega T))}}}.
                \end{equation}
From this one infers  the ground state energy  $E_0= \frac{\omega}{2}$ and, with some 
ingenuity, the higher eigenvalues $E_n$  with $n= 2, 4,....$ as well.

The above calculations amount to a derivation of the transformation 
\begin{equation}
   X_t \rightarrow X_t + \int_0^t \!\!\rho_T(s)ds = X_t + \omega \int_0^t\!\!\! \tanh{\omega(s-T)}X_s\,ds.               
 \end{equation}
This is a special case both of  the transformations constructed by Cameron and Martin in their derivation
 of Eq.(\ref{harmonic.4}) and for expectations in related time-inhomogeneous problems with 
 quadratic potentials.  They  used differential equation methods, however, rather than stochastic integrals.
\end{subsection}
\end{section}
\end{appendix}


\end{document}